\documentclass[aps,reprint,superscriptaddress,unsortedaddress,amsmath,amssymb]{revtex4-2}
\usepackage[utf8]{inputenc}
\usepackage{amsfonts}
\usepackage{times}
\usepackage[cmyk]{xcolor}
\usepackage{tikz}
\usetikzlibrary{shapes.geometric}
\usepackage{xcolor}
\usepackage{verbatim}
\usepackage{pgfplotstable}
\usepackage{url}
\usepackage[hidelinks]{hyperref}
\usepackage{subcaption}
\usepackage{amsthm}
\usepackage{amsmath}
\usepackage{tabularx}
\usepackage{graphicx}
\usepackage{array}
\usepackage{algorithm}
\usepackage{algpseudocode}
\usepackage{multirow}

\usetikzlibrary{angles}
\usetikzlibrary{quantikz, patterns}
\usepackage{cleveref}
\newtheorem{thm}{Theorem}\crefname{thm}{Theorem}{Theorems}
\newtheorem{lem}{Lemma}\crefname{lem}{Lemma}{Lemmas}
\newtheorem{prp}{Proposition}\crefname{prp}{Proposition}{Propositions}
\crefname{cor}{Corollary}{Corollaries}
\crefname{dfn}{Definition}{Definitions}
\crefname{section}{Section}{Sections}
\crefname{appendix}{Appendix}{Appendices}

\newcommand{\SQiSW}{\mathrm{SQ{i}SW} }
\newcommand{\iSWAP}{\mathrm{{i}SWAP} }
\newcommand{\CNOT}{\mathrm{CNOT} }
\newcommand{\CPhase}{\mathrm{CPHASE} }
\newcommand{\SWAP}{\mathrm{SWAP} }

\date{\today}

\begin{document}
\title{Compiling Arbitrary Single-Qubit Gates Via the Phase-Shifts of Microwave Pulses}
\author{Jianxin Chen}
\author{Dawei Ding}
\author{Cupjin Huang}
\affiliation{Alibaba Quantum Laboratory, Alibaba Group USA, Bellevue, Washington 98004, USA}
\author{Qi Ye}
\affiliation{Alibaba Quantum Laboratory, Alibaba Group, Hangzhou, Zhejiang 311121, People's Republic of China
\\
Institute for Interdisciplinary Information Sciences, Tsinghua University, Beijing 100084, People's Republic of China}

\begin{abstract}
We give an arbitrary single-qubit gate compilation scheme on superconducting processors that takes advantage of tuning the phase shift of microwave pulses to obtain a continuous gate set. This scheme is compatible with any two-qubit gate, and we only need to calibrate the $X_\pi$ and $X_{\pi/ 2}$ pulses. We implement this on fluxonium and obtain state-of-the-art fidelities. We give two other schemes: the first requires one $X_\pi$ pulse and one pulse with a variable rotation angle, and the second requires four $X_{\pi/2}$ pulses. We also find that if we can do virtual $Z$ gates, then we can also do virtual gates around any axis. Our results apply to any physical platform that natively supports virtual $Z$.
\end{abstract}

\maketitle

\section{Introduction}
\label{sec:pmw}

The conventional way to realize universal quantum computing is to compile a quantum circuit using arbitrary single-qubit gates, that is, all of $SU(2)$, and a fixed two-qubit gate, such as $\CNOT$. Hence, arbitrary single-qubit gates are ultimately necessary to run quantum algorithms. Furthermore, arbitrary single-qubit gates can help alleviate two-qubit gate errors~\cite{lao2022software} and are necessary for certain benchmarking schemes~\cite{huang2021towards,kong2021framework}. However, it is a nontrivial task to realize arbitrary single-qubit gates on an experimental device. One way to realize arbitrary single-qubit gates is via approximate compilation using a finite universal single-qubit gate set, such as $\{H, S, T\}$~\cite{nielsen2010quantum}. However, the gate fidelities on current experimental devices are limited, with decoherence still being a significant issue. This approach would therefore lead to unacceptably low fidelities. 

Another way to realize arbitrary single-qubit gates is via exact compilation using continuous gate sets, such as the Euler angle decomposition:
\begin{align}
\label{eq:zxz}
    U = Z_\theta X_\sigma Z_\phi,
\end{align}
where $X_\sigma := e^{-i\sigma X/2}$ and $Z_\phi := e^{-i \phi Z/2}$.
The problem with this, however, is that we need to physically implement infinitely many single-qubit gates. This is difficult because the physical control parameters on an experimental device can have complicated relationships with the parameters of the abstract unitary gate being implemented. For example, for a superconducting qubit, which is the physical platform we will focus on, single-qubit gates are usually implemented via a microwave pulse sent on a drive line capacitively coupled to the qubit. For illustration, the circuit diagram for a Cooper pair box qubit is shown in~\Cref{fig:mw_qubit}.
\begin{figure}[h]
	\centering
	\includegraphics[width =0.4\textwidth]{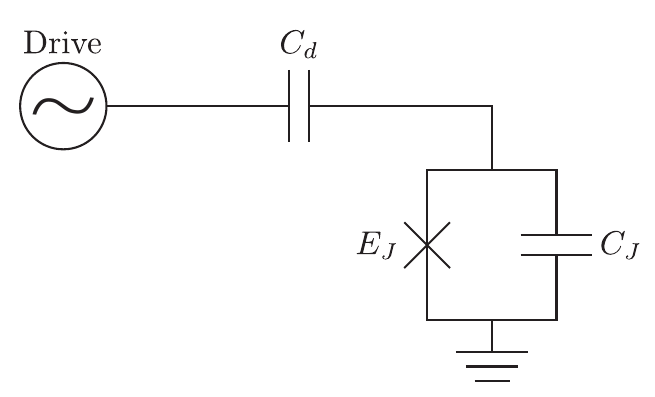}
	\caption{A microwave drive line capacitively coupled to a Cooper pair box. }
	\label{fig:mw_qubit}
\end{figure}
This pulse implements an $X_\sigma$ gate, where the rotation angle $\sigma$ is determined by the amplitude and duration of the pulse. However, there are complicating factors: to implement the desired rotation angle to high precision we would also need to take into account the pulse shape, as well as pulse distortion at short time scales, and even complex processes such as leakage and phase errors induced by the AC-Stark shift~\cite{chen2016measuring}. In experiments, it is therefore necessary to calibrate the gate parameters such as duration, amplitude, and detuning, for every rotation angle we want to implement (See for instance~\cite{barends2014rolling}.).

There is a known way to accurately implement a continuous gate set without an infinite calibration overhead for superconducting qubits. It is well-known that in the rotating frame, by adding a phase shift $\phi$ to the microwave pulse implementing an $X_\sigma$ gate, in the frame rotating with the qubit we instead obtain the gate~\cite{krantz2019quantum}
\begin{align}
    \label{eq:conj_x_rot}
    X_\sigma(\phi)\equiv Z_{-\phi}X_\sigma Z_\phi.
\end{align}
We call $\sigma$ the rotation angle of $X_\sigma(\phi)$, and below we will often fix $\sigma$ and take $\phi$ to be a free variable. This is very exciting observation; by calibrating an $X_\sigma$ gate, we get a continuous gate set $\{X_\sigma(\phi)\}_\phi$ by tuning $\phi$. Furthermore, the correspondence between the phase shift $\phi$ and the conjugation angle $\phi$ in $Z_{-\phi}X_\sigma Z_\phi$ is not complicated by other factors. That is, to high accuracy, the phase shift $\phi$ is equal to the conjugation angle $\phi$~\cite{mckay2017efficient}. Furthermore, the phase shift $\phi$ can made be very precise experimentally~\cite{mckay2017efficient} using a global frequency reference such as an atomic clock~\cite{ball2016role}. We will refer to this trick as phase-shifted microwave (PMW) pulses.

We can take advantage of this continuous gate set to compile arbitrary single-qubit gates. This was already done in~\cite{mckay2017efficient}, arbitrary single-qubit gates are compiled using 
\begin{align}
\label{eq:vz_scheme}
    Z_{\theta}X_{\pi/2}Z_{\phi}X_{\pi/2}Z_{\omega}.
\end{align}
This can be compiled via two $X_\sigma(\phi)$ rotations, save for an extraneous $Z$ rotation $Z_{-(\theta+\phi+\omega)}$. We will refer to this as the~\emph{virtual $Z$ scheme}, as the rightmost $Z_\phi$ in~\Cref{eq:conj_x_rot} is effectively used as a $Z$ rotation. 
The extraneous $Z$ rotation can be taken care of with classical post-processing that tracks all the phases in a procedure known as \emph{phase carrying}, which usually assumes that the two-qubit gate used has a special matrix structure such that the extraneous phases can be carried across it. We call such a gate a \emph{phase carrier}. For example, for an $\iSWAP$ gate, the two extraneous phases can be carried across the gate, but the phases will be exchanged between the two qubits. Another common phase carrier is the $\mathrm{CZ}$ gate. For a detailed explanation of phase carrying as well as a mathematical characterization of these special two-qubit gates and their local equivalence classes, see~\Cref{sec:phase_carrier}.
 
The assumption that the two-qubit gate being used is a phase carrier does not always hold. Although microwave-activated gates can get around this, the usual approach for realizing gates such as $\SQiSW \equiv \sqrt{\iSWAP}$ or other general $\iSWAP$ family gates via flux tuning is explicitly incompatible with phase carrying. Note that this incompatibility is also with the use of the virtual $Z$ gate to realize arbitrary $Z$ rotations as corrections to two-qubit gates~\cite{krantz2019quantum,mckay2017efficient,mi2021information,qiskitRZ}. Other examples of incompatible two-qubit gate schemes include the dynamical decoupled two-qubit gate (we refer to this as $\mathrm{DDCZ}$) in~\cite{guo2018dephasing} and gates realized through tunable couplers, such as general $\mathrm{fSim}$ family~\cite{foxen2020demonstrating} and $\mathrm{bSWAP}$ family~\cite{roth2017analysis} gates. Furthermore, new gate schemes are still being being developed, and compatibility with the virtual $Z$ scheme can be a quite restrictive requirement. 

In this letter we propose novel compilation schemes that still take advantage of using PMW pulses to exactly compile arbitrary single-qubit gates but without incurring extraneous $Z$ rotations~\footnote{To disambiguate, we refer to virtual $Z$ gates or schemes as the use of PMW pulses to obtain an effective $Z$ rotation $\cdot Z_\phi$, while we refer to our proposed PMW gates or schemes as the use PMW pulses to obtain an effective $Z$ conjugation $Z_{-\phi} \cdot Z_\phi$. The former requires phase carrying while the latter does not. In other words, PMW compiled gates do not contribute to tracked phases. }. The resulting compilation schemes are compatible with any two-qubit gate. In particular, one of the schemes requires just three $X_\sigma(\phi)$ gates with fixed rotation angles. This is the minimal necessary since $SU(2)$ has 3 real dimensions. It requires only the calibration of $X_\pi$ and $X_{\pi/2}$ pulses, which are already necessary for Clifford-based randomized benchmarking and the measurement of $T_1$ and $T_2$ coherence times. We additionally provide other variants of this scheme, one that requires two $X_\sigma(\phi)$ gates, but one of which with a variable rotation angle. The other scheme requires four $X_\sigma(\phi)$ gates but with the benefit that only the $X_{\pi/2}$ pulse needs to be calibrated. Lastly, we prove that the ability to do virtual $Z$ implies we can do single-qubit virtual rotations \emph{around any axis}, not just $Z$. We give compilation schemes based on this fact that are not only compatible with phase carriers but also with gates locally equivalent to one, such as $\CNOT$, the cross-resonance ($\mathrm{CR}$) gate~\footnote{Note however that the cross-resonance gate is microwave-activated~\cite{paraoanu2006microwave,rigetti2010fully}, in which case we do not need the two-qubit gate to be a phase carrier. }, or even the M\o lmer-S\o rensen ($\mathrm{MS}$) gate\cite{sorensen2000entanglement} on trapped-ion systems, all of which are locally equivalent to $\mathrm{CZ}$. Note that since we treat everything at an abstract level, our compilation schemes can be used on any quantum computing platform which natively supports $X_\sigma(\phi)$ gates. This includes trapped-ion systems~\cite{bruzewicz2019trapped,saki2021shuttle}, quantum dots~\cite{burkard2021semiconductor}, and nitrogen-vacancy centers~\cite{dobrovitski2013quantum}.

We summarize all single-qubit gate compilation schemes that use PMW pulses in~\Cref{tab:results}, along with the pulses needed for each and their two-qubit gate compatibilities. We also list examples of native two-qubit gates for which these compilation schemes would be suitable for. Note this only takes into account the matrix form and not the physical implementation. 
\begin{table}[hpt]
    \centering
    \begin{tabular}{|c|c|c|c|}
    \hline
      Scheme   & Pulses & 2Q Compatibility & Example Gates\\
      \hline
        vZ-1 &  1V & \multirow{2}{*}{Phase Carrier}& $\mathrm{CPHASE}(\theta)$\\
        \cline{1-2}
        vZ-2 & 2F & & $\iSWAP$, $ \mathrm{bSWAP}$\\\hline
        PMW-3  & 3F & \multirow{3}{*}{All-Compatible} & $\SQiSW$, $\iSWAP(\theta)$\\
        \cline{1-2}
        PMW-4 &  4F & & $\mathrm{bSWAP}(\theta)$\\
        \cline{1-2}
        PMW-2 & 1V+1F & & $\mathrm{fSim}(\theta,\phi)$\\
        \hline
        vR-1 & 1V & \multirow{2}{*}{Leaky Gate} & $\CNOT$, $\mathrm{CR}$\\
        \cline{1-2}
        vR-2 &  2F & & $\mathrm{DDCZ}$, $\mathrm{MS}$ \\
        \hline
    \end{tabular}
    \caption{List of the compilation schemes in this letter. In the ``Pulses'' column, the different letters stand for fixed (F) rotation angle and variable (V) rotation angle. The ``Example Gates'' column lists native two-qubit gates for which the scheme would be suitable for.}
    \label{tab:results}
\end{table}
Other than the vZ schemes~\cite{mckay2017efficient}, all compilation schemes are novel and are elaborated on in~\Cref{sec:pmw}. We conclude in~\Cref{sec:discussion} with a discussion.

\section{PMW Compilation schemes}
\label{sec:pmw}
In this section we give the different all-compatible compilation schemes based on PMW pulses. For completeness, we first provide the standard exposition on single-qubit gates via microwave pulses starting with the circuit Hamiltonian~\cite{krantz2019quantum}. In~\Cref{fig:mw_qubit} we have a standard example of a microwave drive line coupled to a superconducting qubit circuit, this one being the Cooper pair box. Note that a resonant pulse on the flux line would lead to the same qubit Hamiltonian, and so our PMW compilation schemes would apply in that case as well.

The Hamiltonian of such a system projected onto the computational subspace (spanned by ground state $\vert 0 \rangle$ and first excited state $\vert 1 \rangle$) is given by
\begin{align*}
	H(t) = -\frac{\omega_{01}}{2} Z + \Omega(t) \cos(\omega_D t+ \phi) X,
\end{align*}
where $\hbar \omega_{01} \equiv E_1 - E_0$ is the energy difference between $\vert 1\rangle$ and $\vert 0\rangle$, $\Omega(t)$ is the envelope of the microwave pulse, $\omega_D$ its frequency, and $\phi$ its initial phase. Then, in the frame rotating with the qubit, when the drive is resonant the Hamiltonian becomes
\begin{align*}
	H_\text{rf}(t) = \frac{\Omega(t)}{2} (\cos \phi X - \sin \phi Y).
\end{align*}
Note that here we applied the rotating wave approximation. When implemented for a time $T$, the gate effected is
\begin{align*}
	U(T)& = \exp\left[ -\frac{i}{2} (\cos \phi X - \sin \phi Y) \int_0^T \Omega(t) dt \right] \\
	&= X_\sigma(\phi),
\end{align*}
where $\sigma\equiv \int_0^T \Omega(t) dt$. We obtain the desired $X_\sigma(\phi)$ rotation. Hence, by varying the initial phase $\phi$ of the microwave pulse, we have a direct handle on the parameter of the unitary we want to implement.

\subsection{Three fixed angle rotations (PMW-3)}
The first scheme we provide can compile an arbitrary single-qubit gate using three $X_\sigma(\phi)$ rotations with fixed rotation angles. An arbitrary single-qubit gate $U \in SU(2)$ can be parameterized by three real parameters:
\begin{align}
\label{eq:1q_gate}
    U(\alpha, \beta, \gamma) = \begin{bmatrix}
        e^{i\alpha}\cos \gamma & - e^{-i\beta} \sin\gamma\\
        e^{i\beta}\sin \gamma & e^{-i\alpha}\cos \gamma
    \end{bmatrix}.
\end{align}
A direct calculation yields that multiplying three $X_\sigma(\phi)$ rotations with rotation angles $\pi/2, \pi,\pi/2$ gives
\begin{align}
    &X_{\pi/2}(\theta)X_\pi(\phi) X_{\pi/2}(\omega) \nonumber\\
    &= 
\begin{bmatrix}
-e^{i(\theta-\omega)/2} \cos\frac{\theta-2\phi+\omega}{2} & -e^{i(\theta+\omega)/2}\sin \frac{\theta-2\phi+\omega}{2}\\
e^{-i(\theta+\omega)/2}\sin \frac{\theta-2\phi+\omega}{2} & -e^{-i(\theta-\omega)/2} \cos\frac{\theta-2\phi+\omega}{2}
\label{eq:3_pmw}
\end{bmatrix}.
\end{align}
We can therefore identify~\Cref{eq:3_pmw} with~\Cref{eq:1q_gate} by setting
\begin{align*}
    (\theta-2\phi+\omega)/2 & = \pi- \gamma\\
    (\theta-\omega)/2 &= \alpha\\
    (\theta+\omega)/2 &= -\beta,
\end{align*}
which is solved by
\begin{align}
\label{eq:3_pmw_angles}
    \theta = \alpha - \beta, \quad \omega  = -\alpha-\beta, \quad \phi = -\beta +\gamma -\pi.
\end{align}
We will refer to this as the \emph{PMW-3 scheme}.

The variable parameters $\alpha, \beta, \gamma$ are directly related to the phase shifts of the microwave pulses $\theta, \phi, \omega$ according to~\Cref{eq:3_pmw_angles}. However, unlike the virtual $Z$ scheme, this scheme does not incur an extraneous $Z$ rotation. Furthermore, only the pulses for $X_{\pi/2}$ and $X_\pi$ need to be calibrated, which are already necessary for realizing Clifford-based randomized benchmarking and $T_1,T_2$ measurements~\cite{krantz2019quantum}. Now, although we need three $X_\sigma(\phi)$ rotations to compile an arbitrary single-qubit gate, certain single-qubit gates can be compiled using fewer rotations, as we show in~\Cref{sec:special_1q}. We also show in~\Cref{sec:enc_gates} that for excitation number conserving two-qubit gates, we can reduce the number of pulses needed for a pair of single-qubit gates adjacent to the two-qubit gate from 6 pulses to 5. Lastly, it is not a coincidence that we come upon this PMW-3 scheme; in~\Cref{sec:uniqueness} we surprisingly find that any compilation scheme for arbitrary single-qubit gates using the minimial number of three $X_\sigma(\phi)$ rotations is essentially the same as our scheme. 

In \cite{wang2022realizing}, the authors study the two-qubit gate $\SQiSW$. To benchmark this gate, they use fully randomized benchmarking (FRB), which requires Haar random gates. Each Haar random gate is compiled using $\SQiSW$ and arbitrary single-qubit gates. The single-qubit gates are compiled using PMW-3. Furthermore, through various calibration techniques, they can measure the single-qubit gate errors on their implementation of $\SQiSW$, and they correct for this at compilation time. For the arbitrary single-qubit gates compiled using PMW-3, fidelities per pulse of more than 99.9\% are achieved, which is comparable with the state-of-the-art.

\subsection{Four fixed angle rotations (PMW-4)}
Since any single-qubit gate has three real parameters, we need at least three phase shifts to cover all the degrees of freedom. Interestingly, by using one more pulse we can further alleviate the burden of calibration but at the cost of incurring a greater overall error. More specifically, the PMW-3 scheme only requires calibrating two fixed $X$ rotation angles: $\pi$ and $\pi/2$. We can reduce this to a single angle $\pi/2$ by splitting the $\pi$ rotation into two $\pi/2$ rotations. To be specific, we propose a \emph{PMW-4 scheme}:
\begin{align}
    &U(\alpha, \beta, \gamma)\nonumber\\
    &= X_{\pi/2}(\theta) X_{\pi/2}(\phi_1)X_{\pi/2}(\phi_2)X_{\pi/2}(\omega)
    .
\label{eq:4_pmw}
\end{align}
Since $X_{\pi/2}X_{\pi/2}=X_{\pi}$, we can verify from~\Cref{eq:3_pmw_angles} that a solution to~\Cref{eq:4_pmw} is
\begin{align*}
    \theta = \alpha - \beta, \quad \omega  = -\alpha-\beta, \quad \phi_1 = \phi_2 = -\beta +\gamma -\pi.
\end{align*}

\subsection{One variable angle and one fixed angle rotation (PMW-2)}
We also give a compilation scheme involving one variable rotation angle and one fixed rotation angle $\pi$. In particular, we can do a direct calculation to conclude
\begin{align*}
    X_\sigma(\theta)X_\pi(\omega) = U(\alpha, \beta, \gamma)
\end{align*}
by setting
\begin{align*}
    \sigma = 2 \gamma - \pi, \quad \theta = 3 \pi/2+\alpha-\beta, \quad \omega = 3 \pi/2-\beta.
\end{align*}
Although this involves a $X_\sigma(\phi)$ rotation with a variable rotation angle and therefore would have calibration issues, we only need two rotations instead of three as in the PMW-3 scheme. We will refer to this as the \emph{PMW-2 scheme}. This scheme is to the $ZXZ$ scheme via~\Cref{eq:zxz} using a $X_\sigma(\phi)$ gate with an extraneous phase as the PMW-3 scheme is to the virtual $Z$ scheme. Since it involves a $X_\pi(\phi)$ rotation, additional $Z$ rotations can be incorporated in a similar manner to that of the PMW-3 scheme.

\subsection{Compatibility with leaky gates (virtual $R$)}
Here we generalize the virtual $Z$ scheme to make it compatible with any leaky gate. A two-qubit gate $V$ is called \emph{leaky}~\cite{peterson2020fixed} on the first wire, if there exists an arbitrary axis single-qubit rotation $R_{\phi}$ and single-qubit gate families $C(\phi), D(\phi)$ such that
$$V(R_\phi\otimes I)=(C(\phi)\otimes D(\phi))V.$$
Leaky gates are a direct generalization of phase carriers, and there are many experimentally relevant gates that are leaky but are not phase carriers, such as the $\mathrm{DDCZ}$, the $\CNOT$ gate and the  or the M\o lmer-S\o rensen gates. In these cases the virtual $Z$ compilation scheme is not directly applicable. Instead, we propose virtual $R$ compilation schemes that decomposes a single-qubit gate into a pulse and an extraneous rotation along a given axis. The freedom to choose this axis implies we can perform virtual rotations around any axis. In particular we prove the following: given an arbitrary rotation axis $R=aX+bY+cZ$, for arbitrary $U\in SU(2)$, there exists angles $\phi,\omega, \theta$ such that
$$U = R_\theta X_{\pi/2}(\phi)X_{\pi/2}(\omega).$$

The resulting compilation scheme uses two phase-shifted $X_{\pi/2}$ pulses, just like the virtual $Z$ scheme in~\Cref{eq:vz_scheme}. We can also obtain a generalization of the compilation scheme via~\Cref{eq:zxz} using one $X_\sigma(\phi)$ gate with an extraneous phase. To be explicit, we call these schemes \emph{vR-2} and \emph{vR-1} respectively, and to parallel our vR scheme names, we call the corresponding schemes \emph{vZ-2} and \emph{vZ-1}. The number of pulses is the same as the virtual $Z$ schemes, saving one pulse per gate compared to the PMW compilation schemes. The proof that we can perform virtual $R$ compilation and the corresponding generalization of phase carrying can be found in~\Cref{sec:vr} and~\Cref{sec:leaky} respectively.

\section{Discussion}
\label{sec:discussion}
Using our PMW-3 scheme we can compile single-qubit gates where the gate's three real parameters are directly related to the phase shifts $\phi$ of the microwave pulses that implements the gate $X_\sigma(\phi)\equiv Z_{-\phi} X_\sigma Z_{\phi}$. Experimentally these phase shifts can be made extremely precise, unlike the angle of rotation $\sigma$, which \emph{a priori} needs to be calibrated for each possible value~\cite{barends2014rolling}. The PMW-3 scheme in comparison only requires calibrating the $X_\pi$ and $X_{\pi/2}$ rotations. The PMW-4 scheme only requires calibrating $X_{\pi/2}$, albeit at the cost of adding another pulse. The PMW-2 scheme only requires two pulses, but requires calibrating an arbitrary $X_\sigma$ rotation. We also gave virtual $R$ schemes that extended the virtual $Z$ schemes to two-qubit leaky gates.


To evaluate our PMW compilation schemes, we need to compare them to the existing alternatives. For phase carriers (or locally equivalent), the compilation scheme of choice is the virtual $Z$($R$) scheme as it only involves two pulses. For two-qubit gates that are not locally equivalent to phase carriers, an alternative to our PMW-3 scheme is the virtual $Z$ scheme plus an arbitrary $Z_\theta$ rotation~\cite{mi2021information}. In general, $Z_\theta$ rotations can be performed via fine flux tuning of the qubit frequency. However, the calibration problem of $X_\sigma$ applies here as well. How the precision compares to that of the microwave phase shift, and whether the additional $X_\pi$ rotation in the PMW-3 scheme under-compensates or over-compensates for the precision difference, is something that would have to be evaluated experimentally. The comparison could be affected by the superconducting circuit parameters and the quality of the control hardware. Note that fixed frequency qubits by definition cannot be tuned and so cannot implement $Z$ rotations in this way. However, on the other hand, the fidelity of gates compiled via PMW-3 only depends the fidelity of the $X_\pi$ and $X_{\pi/2}$ gates, which can be guaranteed via careful calibration. Such a fidelity guarantee is difficult to obtain for uncalibrated $X$ or $Z$ rotations.

It can be argued that the necessity for additional gates, either $Z_\theta$ or $X_\pi(\phi)$ rotations, may suggest a preference for phase carriers in experimental settings. However, phase carriers and even their local equivalence classes are an extremely restricted class of gates, as proved in~\Cref{sec:phase_carrier}. Although most standard two-qubit gates such as $\iSWAP$ and $\mathrm{CZ}$ are phase carriers, there is growing interest in gates such as $\SQiSW$~\cite{mi2021information}, and even continuously parameterized two-qubit gates~\cite{abrams2020implementation,foxen2020demonstrating} for NISQ applications. 

Another question that needs to be addressed is the issue of calibrating arbitrary $X$ rotations. One of the advantages of our PMW-3 scheme is that we only need to calibrate the $X_\pi$ and $X_{\pi/2}$ pulses (only $X_{\pi/2}$ for PMW-4 scheme). However, it is not completely clear if a general $X_\sigma$ rotation really needs individual calibration. In~\cite{barends2014rolling}, a good number of rotation angles were calibrated using the rapid technique in~\cite{kelly2014optimal}, and it was claimed that these calibrated parameters provide an interpolation table for intermediate rotation angles. Although intuitive, we are not aware of such an interpolation method that can achieve fidelities comparable to that of the calibrated angles. The notion of geometric gates~\cite{xu2020experimental} addresses this somewhat by choosing compilation schemes which are innately robust to control parameter errors. However, how much this robustness affects the final gate fidelities of arbitrary single-qubit gates is still relatively unstudied. 

We conclude with a few open questions. Although we found that we can compile all of $SU(2)$ using three $X_\sigma(\phi)$ gates of fixed angles, we do not give a comprehensive theory of the relationship between the fixed angles available and what subsets of $SU(2)$ can be compiled. In particular this could answer what minimal fixed set of calibrated angles would be optimal for a specific algorithm, or for Haar random gates. Another interesting avenue is to find optimal compilation schemes for specific two-qubit gates. As shown in~\Cref{sec:enc_gates}, even gates that are not locally equivalent to phase carriers do not have to cost six pulses for a pair of single-qubit gates. As another example, the cross-resonance gate is not a phase carrier but is still compatible with virtual $Z$ because it is microwave-activated. This allows us perform procedures analogous to adding a phase shift to microwave pulses for single-qubit gates~\cite{mckay2017efficient}. The full implications of this for compilation, along with the fact that it is locally equivalent to a phase carrier and is therefore compatible with virtual $X$ on the second qubit, is worth further study.

\begin{acknowledgments}
We would like to thank Chunqing Deng, Yaoyun Shi, Tenghui Wang, Zhaohui Yang, Gengyan Zhang, and Jun Zhang for insightful discussions and comments on our work. DD would like to thank God for all of His provisions.
\end{acknowledgments}

\appendix
\section{Phase Carrying}
\label{sec:phase_carrier}
In this section we go into more depth about phase carrying and phase carriers. 

When a quantum circuit is expressed in terms of single-qubit gates and a fixed two-qubit gate, if we use the virtual $Z$ scheme to compile single-qubit gates, the extraneous $Z$ rotations can cause a phase error for the two-qubit gate that follows~\footnote{Note that if a single-qubit gate follows we can always absorb it into the single-qubit gate we're compiling. }. This problem can be averted if we assume that the two-qubit gate $U$ is a phase carrier, that is if for all $\theta_0, \theta_1$, there exists $\phi_0, \phi_1$ such that
\begin{align}
    U (Z_{\theta_0}\otimes Z_{\theta_1}) = (Z_{\phi_0}\otimes Z_{\phi_1})U.
\end{align}
With this assumption we can perform phase carrying, a procedure that is is visually demonstrated in~\Cref{fig:2q_carrying}. We call this procedure phase carrying since the extraneous $Z$ rotation is ``carried through'' each subsequent gate.
\begin{figure}[h]
    \centering
    \includegraphics[width=0.4\textwidth]{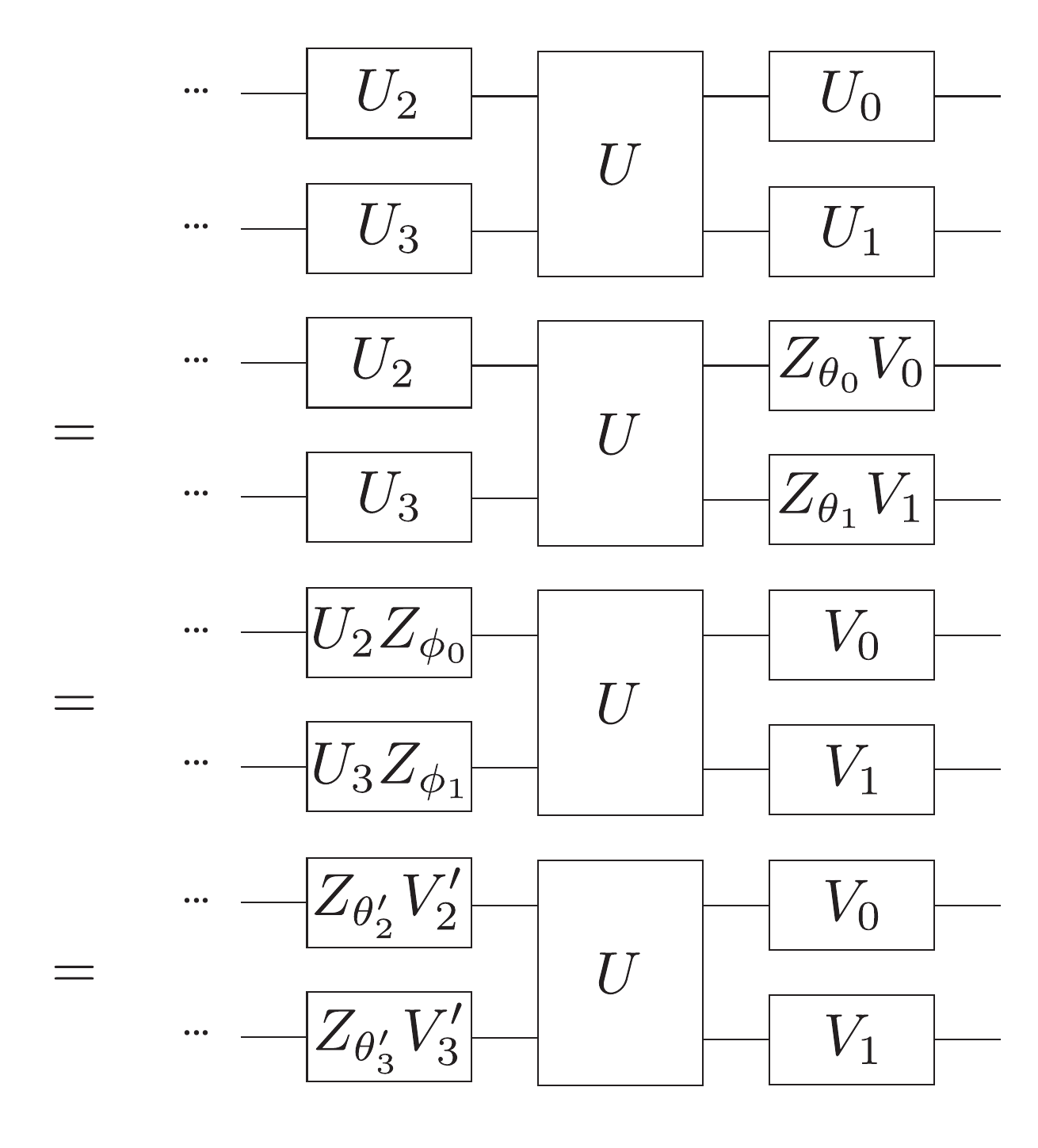}
    \caption{Visualization of compiling an elementary building block of a quantum circuit with arbitrary single-qubit gates and a phase carrier $U$ via the phase carrying procedure. In the first equality, $V_i$ is the part of $U_i$ that can be compiled with two $X_{\pi/2}(\phi)$ gates, with $Z_{\theta_i}$ being the respective extraneous phase. The second equality applies the phase carrier property of $U$. In the last equality, $V_i'$ is the part of $U_i Z_{\phi_j}$ that can be compiled using two $X_{\pi/2}(\phi)$ gates, with $Z_{\theta_i'}$ being the extraneous phase.
    }
    \label{fig:2q_carrying}
\end{figure}

Below we mathematically characterize phase carriers and their local equivalence classes.

\subsection{Proof of phase carrier condition}
We prove in this section the following proposition: \begin{prp}
\label{prp:phase_carrier}
A two-qubit unitary $U \in SU(4)$ is phase a carrier iff the elementwise absolute value of $U$ is a permutation matrix and the permutation $\pi: \mathbb{Z}_2 \times \mathbb{Z}_2 \to \mathbb{Z}_2 \times \mathbb{Z}_2$ satisfies the condition
\begin{align*}
    \forall a,b \in \mathbb{Z}_2, \quad \pi(\overline{a,b}) = \overline{\pi(a,b)},
\end{align*}
where $\overline{(a,b)}$ refers to the bitwise flip operation. 
\end{prp}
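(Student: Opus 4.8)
The plan is to recast the phase-carrier condition as a statement about how conjugation by $U$ acts on the two-parameter family $\mathcal{D} = \{Z_{\theta_0}\otimes Z_{\theta_1} : \theta_0,\theta_1 \in \mathbb{R}\}$, every element of which is diagonal in the computational basis. Writing $v = (a,b) \in \mathbb{Z}_2\times\mathbb{Z}_2$, each $Z_{\theta_0}\otimes Z_{\theta_1}$ acts on $\vert v\rangle$ by a phase $e^{i\chi_\theta(v)}$, where $\chi_\theta(v)$ is linear in $\theta = (\theta_0,\theta_1)$ and, because flipping both qubits negates both single-qubit $Z$-phases, satisfies the key antisymmetry $\chi_\theta(\overline v) = -\chi_\theta(v)$. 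The condition defining a phase carrier is then exactly $U D U^{-1} \in \mathcal{D}$ for all $D \in \mathcal{D}$. I would record at the outset the elementary fact that the family of functions $\{\chi_\phi : \phi\in\mathbb{R}^2\}$ is precisely the two-dimensional space of all \emph{flip-odd} functions on $\mathbb{Z}_2\times\mathbb{Z}_2$ (those $f$ with $f(\overline v) = -f(v)$, which form a two-dimensional space since the flip involution is fixed-point-free); this single observation will drive both directions.

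For the forward direction, the first step is to show that $U$ must be monomial, that is, that its elementwise absolute value is a permutation matrix. I would choose a generic $D \in \mathcal{D}$ whose four diagonal phases $\pm(\theta_0\pm\theta_1)/2$ are pairwise distinct modulo $2\pi$; such $\theta$ clearly exist. Then $UDU^{-1}$ is again an element of $\mathcal{D}$, hence diagonal, and it has the same four distinct eigenvalues as $D$. Matching the one-dimensional eigenspaces forces $U$ to send each $\vert v\rangle$ to a scalar multiple of a single basis vector $\vert\pi(v)\rangle$, which is exactly the monomial structure; this defines the permutation $\pi$. I expect this eigenvalue-matching step to be the main obstacle, since it is where the rigidity of the problem is genuinely used — everything afterwards is linear algebra on a four-point set.

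With $U$ monomial, the remaining work is uniform in both directions. Applying $U(Z_{\theta_0}\otimes Z_{\theta_1})$ and $(Z_{\phi_0}\otimes Z_{\phi_1})U$ to $\vert v\rangle$ and comparing phases shows that the phase-carrier condition is equivalent to: for every $\theta$ there is a $\phi$ with $\chi_\theta(v) = \chi_\phi(\pi(v))$ for all $v$, that is, the function $w \mapsto \chi_\theta(\pi^{-1}(w))$ lies in $\{\chi_\phi\}$, the space of flip-odd functions. Using the antisymmetry of $\chi_\theta$, this membership says $\chi_\theta(\pi^{-1}(\overline w)) = \chi_\theta(\overline{\pi^{-1}(w)})$ for all $\theta$ and $w$. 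Since the linear dependence of $\chi_\theta(u)$ on $\theta$ determines $u$ uniquely (the family $\{\chi_\theta\}_\theta$ separates points), this forces $\pi^{-1}(\overline w) = \overline{\pi^{-1}(w)}$, equivalently $\pi(\overline v) = \overline{\pi(v)}$. Reading the chain of equivalences backwards proves the converse: a monomial $U$ whose permutation commutes with the bitwise flip makes $w\mapsto \chi_\theta(\pi^{-1}(w))$ flip-odd, hence realizable as some $\chi_\phi$, so the required $\phi$ exists for every $\theta$.
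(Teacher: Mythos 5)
Your proof is correct and follows essentially the same route as the paper's: monomiality is forced because the diagonal phases of $Z_{\theta_0}\otimes Z_{\theta_1}$ can be chosen pairwise distinct, the flip condition on $\pi$ then follows from the antisymmetry of those phases under the bitwise flip (the paper's complex-conjugation step), and the converse is obtained by realizing the required $\phi_0,\phi_1$, which your flip-odd-function observation and the paper's explicit linear system both accomplish. The differences are only cosmetic---eigenspace matching in place of the paper's two-nonzero-entries-in-a-row contradiction, and the two-dimensional space of flip-odd functions in place of explicit phase bookkeeping---though note the phase identities you manipulate hold a priori only modulo $2\pi$, a point handled (as in the paper) by varying $\theta$ continuously, and that the paper additionally proves robustness to an overall global phase, which the stated proposition does not require.
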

Common two-qubit gates such as $\SWAP$, $\mathrm{CZ}$, and $\iSWAP$ are all phase carriers. However, other common two-qubit gates, such as $\CNOT$, the cross-resonance gate~\cite{qiskitRZ}, and $\SQiSW$~\cite{mi2021information}  are not phase carriers. An abstract generalization of phase carriers, called leaky entanglers, was considered in~\cite{peterson2020fixed}. 

\begin{proof}
It will be more convenient to relax the condition
$$U (Z_{\theta_0} \otimes Z_{\theta_1}) = (Z_{\phi_0}\otimes Z_{\phi_1} )U$$
by allowing a global phase. Since $U \in SU(4)$, $Z_\theta \in SU(2)$, there are only four possbilities for the global phase: $\pm 1, \pm i$. For the phase $-1$ we can always set $\phi_0 \to \phi_0 +\pi$ to eliminate it. Hence, the only nontrivial global phases are $\pm i$, which we will show below is impossible. That is, we show that if $U$ satisfies
$$\forall \theta_0, \theta_1, \exists \phi_0, \phi_1 \text{ s.t. } U (Z_{\theta_0} \otimes Z_{\theta_1}) = (Z_{\phi_0}\otimes Z_{\phi_1} )U,$$
up to a global phase, it must satisfy it exactly.
\\~\\
\noindent $(\Rightarrow)$ Suppose $U \in SU(4) $ is a phase carrier up to a global phase. We have
$$Z_\theta \equiv 
\begin{bmatrix}
e^{i \theta/2} & 0 \\
0 & e^{-i\theta/2}
\end{bmatrix} \in SU(2)$$
and
$$Z_\theta \otimes Z_\phi = 
\begin{bmatrix}
e^{i(\theta+\phi)/2} & 0 & 0 & 0\\
0 & e^{i(\theta-\phi)/2} & 0 & 0\\
0 & 0 & e^{i(-\theta+\phi)/2} & 0\\
0 & 0 & 0 & e^{-i(\theta+\phi)/2}\\
\end{bmatrix}.$$
Suppose for contradiction that in row $j$, $U$ has two nonzero entries at $k,k'$. Then, the definition of a phase carrier
$$\forall \theta_0, \theta_1, \exists \phi_0, \phi_1 \text{ s.t. } U (Z_{\theta_0} \otimes Z_{\theta_1}) = (Z_{\phi_0}\otimes Z_{\phi_1} )U$$
up to global phase implies that 
\begin{align*}
(Z_{\theta_0}\otimes Z_{\theta_1})_{kk} =(Z_{\phi_0}\otimes Z_{\phi_1})_{jj} \\
(Z_{\theta_0}\otimes Z_{\theta_1})_{k'k'}=  (Z_{\phi_0}\otimes Z_{\phi_1})_{jj}
\end{align*}
up to a global phase. However, this implies
$$(Z_{\theta_0}\otimes Z_{\theta_1})_{kk} =(Z_{\theta_0}\otimes Z_{\theta_1})_{k'k'}$$
exactly. Looking at the matrix form of $Z_{\theta_0} \otimes Z_{\theta_1}$, it is always possible to choose $\theta_0, \theta_1$ so that
$$(Z_{\theta_0}\otimes Z_{\theta_1})_{kk} \neq (Z_{\theta_0}\otimes Z_{\theta_1})_{k'k'}. $$
Therefore, by contradiction $U$ only has one nonzero entry for each row. Since it is unitary, its elementwise absolute value is a permutation matrix.

Now, denote the permutation $U$ induces by $\pi: \mathbb{Z}_2 \times\mathbb{Z}_2 \to \mathbb{Z}_2 \times\mathbb{Z}_2$. Then, we can identify the elements in $\mathbb{Z}_2 \times \mathbb{Z}_2$ with the entries in $Z_{\theta} \otimes Z_\phi$ as follows:
$$(a,b) \in \mathbb{Z}_2\times \mathbb{Z}_2 \iff e^{i[(-1)^a \theta + (-1)^b \phi]}.$$
With this in mind, it is not hard to see that if the elementwise absolute value of $U$ is a permutation matrix, being a phase carrier up to global phase is equivalent to
\begin{align}
    &\forall \theta_0, \theta_1, \exists \phi_0, \phi_1 \text{ s.t. } e^{i[(-1)^{\pi(a,b)_0} \theta_0 + (-1)^{\pi(a,b)_1} \theta_1]} \nonumber\\
    &=e^{i[(-1)^{a} \phi_0 + (-1)^{b} \phi_1]} \times c
\label{eq:carrier_equiv}
\end{align}
for all $(a,b) \in \mathbb{Z}_2 \times \mathbb{Z}_2$,
where we denote $\pi(a,b)_j$ as the $j$-th element of $\pi(a,b)$ and $c\in\{\pm i\}$. By taking the complex conjugate of both sides, we can conclude
\begin{align*}
&\forall \theta_0,\theta_1 \quad e^{i[(-1)^{\overline{\pi(a,b)}_0} \theta_0 + (-1)^{\overline{\pi(a,b)}_1} \theta_1]} \\
&= e^{i[(-1)^{\pi(\overline{a,b})_0} \theta_0 + (-1)^{\pi(\overline{a,b})_1} \theta_1]} \times c^*/c.
\end{align*}
Setting $\theta_1 =0,$ we have
$$\forall \theta_0 \quad e^{i(-1)^{\overline{\pi(a,b)}_0} \theta_0 } = e^{i(-1)^{\pi(\overline{a,b})_0} \theta_0}  \times c^*/c.$$
Hence, $c^* = c$ and $\pi(\overline{a,b})_0 = \overline{\pi(a,b)}_0$. Similarly, $\pi(\overline{a,b})_1 = \overline{\pi(a,b)}_1$. $c^* = c$ implies there cannot be a nontrivial global phase. The conditions on $\pi$ have also been obtained.
\\~\\
\noindent $(\Leftarrow)$ This follows directly by the equivalent definition of a phase carrier for $U$ whose elementwise absolute value is a permutation matrix in~\Cref{eq:carrier_equiv}, without the global phase. We simply solve the system of equations
$$\phi_0+\phi_1 = (-1)^{\pi(0,0)_0} \theta_0 + (-1)^{\pi(0,0)_1} \theta_1$$
$$\phi_0-\phi_1 = (-1)^{\pi(0,1)_0} \theta_0 + (-1)^{\pi(0,1)_1} \theta_1$$
to obtain the desired $\phi_0, \phi_1$. The conditions on $\pi$ ensures that the other equations
$$-(\phi_0+\phi_1) = (-1)^{\pi(1,1)_0} \theta_0 + (-1)^{\pi(1,1)_1} \theta_1$$
$$-(\phi_0-\phi_1) = (-1)^{\pi(1,0)_0} \theta_0 + (-1)^{\pi(1,0)_1} \theta_1$$
are simultaneously satisfied.
\end{proof}

\subsection{Local equivalence classes of phase carriers}
It is clear the phase carrier condition is not preserved under local unitaries. For example, $\CNOT$ and $\mathrm{CZ}$ are locally equivalent:
\begin{align*}
    \CNOT = (I\otimes H) \mathrm{CZ} (I \otimes H).
\end{align*}
However, the former is not a phase carrier while the latter is. Here we explicitly characterize the local equivalence classes of all phase carrier gates. We will follow the characterization of two-qubit gate local equivalence classes in~\cite{zhang2003geometric}. We give our result as a proposition.
\begin{prp}
\label{prp:local_eq}
The local equivalence classes of phase carrier gates lie on the $\mathrm{I}$ --- $\mathrm{CNOT}$ and $\iSWAP$ --- $\SWAP$ line segments in the Weyl chamber.
\end{prp}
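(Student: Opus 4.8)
The plan is to combine the algebraic characterization from \cref{prp:phase_carrier} with the canonical (Weyl chamber) coordinates of \cite{zhang2003geometric}. By \cref{prp:phase_carrier}, every phase carrier can be written as $U = D P_\pi$, where $P_\pi$ is the permutation matrix of a permutation $\pi$ of $\mathbb{Z}_2\times\mathbb{Z}_2$ and $D$ is a diagonal matrix of phases. The condition $\pi(\overline{a,b}) = \overline{\pi(a,b)}$ says precisely that $\pi$ commutes with the ``total flip'' involution $\tau$, which swaps $\vert 00\rangle\leftrightarrow\vert 11\rangle$ and $\vert 01\rangle\leftrightarrow\vert 10\rangle$; that is, $\pi$ lies in the centralizer $C(\tau)$ inside the symmetric group on the four computational basis states. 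First I would enumerate $C(\tau)$: since $\tau$ has cycle type $2+2$, its centralizer has order $8$ (a copy of the dihedral group $D_4$).

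The organizing observation is that $C(\tau)$ contains the order-four subgroup $K = \{I\otimes I,\ I\otimes X,\ X\otimes I,\ X\otimes X\}$ of ``local'' bit-flip permutations as an index-two (hence normal) subgroup, with the nontrivial coset represented by the transposition realizing $\SWAP$. Since right multiplication by any element of $K$ is a local gate and therefore preserves the local equivalence class, using $P_{\pi_0\ell} = P_{\pi_0}P_\ell$ I would reduce to two representatives: every phase carrier is locally equivalent either to a diagonal phase gate $D$ (when $\pi \in K$) or to $D\cdot\SWAP$ (when $\pi$ is in the nontrivial coset), with $D$ still an arbitrary diagonal of phases in both cases.

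It then remains to locate these two families in the Weyl chamber. For the diagonal family, writing $D = \exp[i(g_1 Z\otimes I + g_2 I\otimes Z + g_3 Z\otimes Z)]$ (up to global phase) peels off the local factor $Z_{2g_1}\otimes Z_{2g_2}$ and leaves $\exp[i g_3 Z\otimes Z]$, a $\CPhase$-type gate whose canonical coordinates are $(2g_3,0,0)$ after reordering; this is exactly the $\mathrm{I}$--$\CNOT$ edge. For the swap family I would exploit that $M = D\cdot\SWAP$ is block diagonal: it acts diagonally on $\mathrm{Span}\{\vert 00\rangle, \vert 11\rangle\}$ and as a pure anti-diagonal (a full swap up to phases) on $\mathrm{Span}\{\vert 01\rangle, \vert 10\rangle\}$, which is the same block pattern as the canonical edge gate $V(c_3) = \exp[\frac{i}{2}(\frac{\pi}{2}X\otimes X + \frac{\pi}{2}Y\otimes Y + c_3 Z\otimes Z)]$. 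To match them I would act by local diagonal gates $Z_a\otimes Z_b$ on both sides and track the single combination of entry phases invariant under such gates, namely $\Delta = \mu_0 + \mu_3 - \mu_1 - \mu_2$ for $D = \mathrm{diag}(e^{i\mu_0},\dots,e^{i\mu_3})$; setting $\Delta$ equal to its value $2c_3 - \pi$ on $V(c_3)$ fixes $c_3$, while an orbit computation shows the local-diagonal orbit of the entry phases is exactly the hyperplane $\{\Delta = \mathrm{const}\}$, so all remaining phases can be absorbed. Hence $M$ lies on the $\iSWAP$--$\SWAP$ edge.

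The main obstacle I anticipate is the swap family: a priori one might fear that generic phases in $D$ push $D\cdot\SWAP$ into the interior of the Weyl chamber rather than keeping it on a one-dimensional edge. The resolution is structural---recognizing that the anti-diagonal action on the single-excitation subspace is a \emph{full} swap (not a partial rotation) forces $c_1 = c_2 = \pi/2$, so that only the single coordinate $c_3$ can vary. Making this precise through the invariant $\Delta$ and verifying surjectivity onto the hyperplane is the technical crux; I would sanity-check it against the known locations of $\iSWAP$ (given by $D = \mathrm{diag}(1,i,i,1)$) and $\SWAP$ (given by $D = I$). A minor point to handle along the way is the global-phase normalization in $SU(4)$ and the $\pm i$ ambiguities, which \cref{prp:phase_carrier} has already shown to be harmless.
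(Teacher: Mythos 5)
Your proposal is correct, and its skeleton matches the paper's: both start from \cref{prp:phase_carrier}, observe that the admissible permutations form a group of order eight that splits into the local bit-flip subgroup $K$ and its $\SWAP$-coset, and then locate the two resulting families $D$ and $D\cdot\SWAP$ (with $D$ diagonal) in the Weyl chamber. Your centralizer framing (permutations commuting with $X\otimes X$) is a cleaner packaging of the paper's direct count of $4\times 2=8$ possibilities, but it is the same enumeration, and your diagonal case (peeling off $Z\otimes I$ and $I\otimes Z$ to leave a $Z\otimes Z$ exponential) is just the exponential form of the paper's factorization $D = e^{i(b+c)/2}(Z_{a-c}\otimes Z_{a-b})\,\CPhase(a-b-c+d)$. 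The genuine difference is the swap family: the paper identifies $D\cdot\SWAP$ as the mirror gate of a $\CPhase$ and simply cites the known mirror-gate coordinates $(\pi/2,\pi/2,x-\pi/2)$ from the literature, whereas you give a self-contained derivation by matching $D\cdot\SWAP$ against the canonical edge gate $V(c_3)$ through the invariant $\Delta=\mu_0+\mu_3-\mu_1-\mu_2$. That argument is sound: the phase shifts reachable by local $Z$ rotations on either side (plus global phase) span exactly the hyperplane $\{\Delta=0\}$, so equality of $\Delta$ suffices to build an explicit local equivalence, and your checks at $\iSWAP$ ($\Delta=-\pi$, $c_3=0$) and $\SWAP$ ($\Delta=0$, $c_3=\pi/2$) confirm the normalization $\Delta = 2c_3-\pi$. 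What the paper's route buys is brevity; what yours buys is independence from the cited mirror-gate formula and an explicit construction of the equivalence. One point to make explicit when writing it up: since the $\mu_i$ are phases, $\Delta$ determines $c_3$ only modulo $\pi$, so you need the Weyl-group folding (the analogue of the paper's chain $(\pi/2,\pi/2,x-\pi/2)\sim(-\pi/2,\pi/2,x-\pi/2)\sim(\pi/2,\pi/2,\pi/2-x)$) to land $c_3$ in $[0,\pi/2]$; this is standard bookkeeping rather than a gap.
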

\begin{proof}
By~\Cref{prp:phase_carrier}, the permutation $\pi:\mathbb{Z}_2 \times \mathbb{Z}_2$ a phase carrier corresponds to must satisfy
\begin{align*}
    \forall a,b \in \mathbb{Z}_2 \quad \pi(\overline{a,b}) = \overline{\pi(a,b)}.
\end{align*}
Now, there are four possible choices for $\pi(0,0)$. This then determines $\pi(1,1)$. There are two remaining choices for $\pi(0,1)$, and this determines $\pi(1,0)$. Hence, in total there are eight possibilities for $\pi$. By expressing $\pi$ as a permutation matrix, we can classify the eight possibilities into two cases. In the following $I_2$ refers to the two-dimensional identity gate.
\begin{enumerate}
    \item $\pi = I_2/X \otimes I_2/X$: We can express any phase carrier gate $U$ in this case as 
    \begin{align*}
        U &=
        \begin{bmatrix}
        e^{ia} & 0 & 0 & 0\\
       0 & e^{ib} & 0 & 0\\
       0 & 0 & e^{ic} & 0\\
       0 & 0 & 0 & e^{id}
       \end{bmatrix} \cdot \pi\\
        & =e^{i(b+c)/2 } (Z_{a-c} \otimes Z_{a-b}) \cdot \mathrm{CPHASE}(a-b-c+d)\cdot \pi,
    \end{align*}
    where $e^{ia},e^{ib},e^{ic},e^{id}$ are the matrix elements on the first, second, third, and fourth rows of $U$, respectively~\footnote{Note that technically $U\in SU(4)$, so $d = -(a+b+c)$. }. Hence, $U$ is locally equivalent to $\CPhase(a-b-c+d)$, so its interaction coefficients is of the form $(x,0,0)$ where $x\in[0,\pi/2]$. Geometrically, the local equivalence class lies on the $I$ ---  $\CNOT$ line segment in the Weyl chamber. 
    
    \item $\pi = \SWAP \cdot  (I_2/X \otimes I_2/X)$: 
    We can express any phase carrier gate $U$ in this case as 
    \begin{align*}
        U &=
        \begin{bmatrix}
        e^{ia} & 0 & 0 & 0\\
       0 & e^{ib} & 0 & 0\\
       0 & 0 & e^{ic} & 0\\
       0 & 0 & 0 & e^{id}
       \end{bmatrix} \cdot \pi\\
        & =e^{i(b+c)/2 } (Z_{a-c} \otimes Z_{a-b}) \cdot \mathrm{CPHASE}(a-b-c+d)\\
        & \cdot \SWAP \cdot (I_2/X \otimes I_2/X),
    \end{align*}
    where $e^{ia},e^{ib},e^{ic},e^{id}$ are the matrix elements on the first, second, third, and fourth rows of $U$, respectively. Hence, $U$ is locally equivalent to the mirror gate of  $\CPhase(a-b-c+d)$, so its interaction coefficients is of the form~\cite{cross2019validating}
    \begin{align*}
        (\pi/2,\pi/2,x-\pi/2) &\sim (-\pi/2, \pi/2, x-\pi/2) \\
        &\sim (\pi/2,\pi/2,\pi/2-x),
    \end{align*} 
    where $x\in[0,\pi/2]$. The first equivalence follows by subtracting $\pi$ from the first coordinate and the second equivalence by flipping the signs of the first and third coordinates. Geometrically, the local equivalence class lies on the $\iSWAP$ ---  $\SWAP$ line segment in the Weyl chamber.  
    
\end{enumerate}

\end{proof}

\section{Special single-qubit gates}
\label{sec:special_1q}
We can reduce the number of pulses per single-qubit gate by looking for the following special cases instead of using a rote lookup table such as in~\cite{mckay2017efficient} which will always incur the same number of pulses. We consider below a few special cases when we restrict ourselves to $\pi/2, \pi$ rotation angles.

If we only use $X_\pi(\phi)$ rotations, we can implement single-qubit gates which are either diagonal or anti-diagonal. This result follows from the simple form of a single $X_\pi(\phi)$ rotation:
\begin{align}
\label{eq:conj_x_pi}
    Z_{-\theta} X_\pi Z_{\theta} =
    \begin{bmatrix}
      0 & -i e^{i\theta}\\
      -i e^{-i\theta} & 0
    \end{bmatrix}.
\end{align}

Consider an anti-diagonal single-qubit gate, that is of the form
\begin{align*}
    \begin{bmatrix}
      0 & - e^{-i\beta}\\
      e^{i\theta} & 0
    \end{bmatrix}.
\end{align*}
This equals the expression in~\Cref{eq:conj_x_pi} with setting $\theta = 3\pi/2 -\beta$.

Now consider a diagonal single-qubit gate, that is of the form
\begin{align}
\label{eq:diag_1q}
    \begin{bmatrix}
        e^{i\alpha} & 0 \\
        0 & e^{-i\alpha}
    \end{bmatrix}.
\end{align}
This is a pure $Z$ rotation of angle $-2\alpha$. This can be compiled by two $X_\pi(\phi)$ rotation of opposite phase shifts:
\begin{align}
\label{eq:2_x_pi}
    Z_{\theta} X_\pi Z_{-\theta} Z_{-\theta} X_\pi Z_{\theta} =
        \begin{bmatrix}
            -e^{-2i\theta} & 0 \\
            0 & -e^{2i\theta}
        \end{bmatrix}.
\end{align}
Hence, we can identify~\Cref{eq:diag_1q} with~\Cref{eq:2_x_pi} by setting $\theta = - (\alpha+\pi)/2$.

We next consider the Clifford gates. Interestingly, we can compile any single-qubit Clifford gate using two or less $X_\sigma(\phi)$ rotations. Following~\cite{crooks2020gates}, we can classify single-qubit Clifford gates into three categories:
\begin{enumerate}
    \item Pauli rotations and half-rotations.
\begin{align*}
    I & = I \nonumber\\
    X_{\pi} & = Z_0 X_\pi Z_0 \nonumber\\
    X_{\pi/2} &= Z_0 X_{\pi/2} Z_0\nonumber\\
    X_{-\pi/2} &= Z_{-\pi} X_{\pi/2} Z_\pi\nonumber\\
    Y_\pi & = Z_{\pi/2} X_{\pi} Z_{-\pi/2} \nonumber\\
    Y_{\pi/2} &= Z_{\pi/2} X_{\pi/2} Z_{-\pi/2}\nonumber\\
    Y_{-\pi/2} &= Z_{-\pi/2} X_{\pi/2} Z_{\pi/2} \nonumber\\
    Z_\pi & = (Z_{-\pi/4} X_\pi Z_{\pi/4} )(Z_{\pi/4} X_\pi Z_{-\pi/4}) \nonumber\\
    Z_{\pi/2} & = (Z_{-3\pi/8} X_\pi Z_{3\pi/8} )(Z_{3\pi/8} X_\pi Z_{-3\pi/8}) \nonumber\\
    Z_{-\pi/2} & = (Z_{-5\pi/8} X_\pi Z_{5\pi/8} )(Z_{5\pi/8} X_\pi Z_{-5\pi/8}).
\end{align*}
This explicitly handles all ten gates in this category.

\item Cousins of the Hadamard gate $H$. These are $\pi$ rotations about axes halfway between the $x,y,z$ axes. That is, letting $(n_x,n_y,n_z)$ be a real unit vector, we consider gates of the form
\begin{align*}
    \exp\left[- i \frac{\pi}{2} (n_x X + n_y Y +n_z Z)\right]=
    \begin{bmatrix}
    -i n_z & -n_y -i n_x\\
    n_y-i n_x & i n_z
    \end{bmatrix},
\end{align*}
where one of $n_x, n_y, n_z$ is $0$ and the other two $\pm \frac{1}{\sqrt{2}}$. By setting $n_x = \frac{1}{\sqrt{2}}, n_y = 0, n_z = \frac{1}{\sqrt{2}}$, we obtain $-iH$. We will use the notation $(n_x, n_y, n_z)_\pi$ to refer to the different gates.

When $n_z=0$, the unitary is anti-diagonal. Thus, we can use a single $X_\pi(\phi)$ rotation to compile it. This covers the $(\frac{1}{\sqrt{2}},\frac{1}{\sqrt{2}},0)_\pi, (-\frac{1}{\sqrt{2}},\frac{1}{\sqrt{2}},0)_\pi$ gates.

The other gates in this category ($n_z = \pm \frac{1}{\sqrt{2}}$) can be compiled using one $X_{\pi/2}(\phi)$ rotation and one $X_\pi(\phi)$ rotation:
\begin{align}
    (Z_{-\theta} X_{\pi/2} Z_\theta) (Z_{-\phi} X_{\pi} Z_\phi) = 
    \frac{1}{\sqrt{2}} \begin{bmatrix}
    - e^{i(\theta-\phi)}  & -ie^{i\phi}\\
    -ie^{-i\phi} &-  e^{-i(\theta-\phi)} 
    \end{bmatrix}.
    \label{eq:pi/2-pi}
\end{align}
Thus, we set 
\begin{align}
    e^{i\phi} & = \mathrm{sgn}(n_x) - i \mathrm{sgn}(n_y) \nonumber\\
    e^{i(\theta-\phi)} & = \mathrm{sgn}(n_z),  \label{eq:H_cousins}
\end{align}
where $\mathrm{sgn}:\mathbb{R} \to \mathbb{R}$ is the sign function:
\begin{align*}
    \mathrm{sgn}(x) =
    \begin{cases}
        1 & x > 0\\
        0 & x=0\\
        -1 & x<0.
    \end{cases}
\end{align*}
Since one of $n_x, n_y$ must be $0$,~\Cref{eq:H_cousins} always has a solution. This covers the other four gates in this category.

\item Pauli $Y$ analog of $H$ and its cousins. These are gates of the form
\begin{align*}
    &\exp\left[\mp i \frac{2\pi}{3} (n_x X + n_y Y +n_z Z)\right]\\
    &=
    \frac{1}{2}
    \begin{bmatrix}
    1\mp i \mathrm{sgn}(n_z) & \mp \mathrm{sgn}(n_y)\mp i\mathrm{sgn}(n_x)  \\
    \pm \mathrm{sgn}(n_y)\mp i \mathrm{sgn}(n_x) & 1\pm i \mathrm{sgn}(n_z)
    \end{bmatrix},
\end{align*}
where $n_x, n_y, n_z$ are $\pm \frac{1}{\sqrt{3}}$. Note
\begin{align*}
    \exp\left[i \frac{2\pi}{3} \left(\frac{1}{\sqrt{3}} X + \frac{1}{\sqrt{3}} Y +\frac{1}{\sqrt{3}} Z\right)\right]= 
    \frac{e^{i\frac \pi 4}}{\sqrt{2}}
    \begin{bmatrix}
    1 & 1\\
    i & -i
    \end{bmatrix},
\end{align*}
where the right-hand side is the Pauli $Y$ analog of the Hadamard gate (that is, the columns are the eigenstates of the Pauli $Y$ gate). By comparing with~\Cref{eq:pi/2-pi}, we can compile these gates using one $X_{\pi/2}(\phi)$ and one $X_\pi(\phi)$ rotations via
\begin{align*}
    e^{i\phi} & = \frac{1}{\sqrt{2}}( \pm \mathrm{sgn}(n_x) \mp i \mathrm{sgn}(n_y) )\\
    e^{i(\theta-\phi)} & = \frac{1}{\sqrt{2}} (-1 \pm i \mathrm{sgn}(n_z)).
\end{align*}
Since $n_x,n_y,n_z$ are all nonzero, there is always a solution. This covers all eight gates in this category.
\end{enumerate}
On average, we require $1 \frac{7}{12} \approx 1.58$ pulses per Clifford gate.

\section{Excitation number conserving gates}
\label{sec:enc_gates}
Many two-qubit gates implemented in experiment satisfy a property which we call \emph{excitation number conserving}, which means it preserves the $\mathrm{Span}\{\vert 00\rangle\}$, $\mathrm{Span}\{\vert 01\rangle, \vert 10\rangle\}$, and $\mathrm{Span}\{\vert 11\rangle\}$ subspaces. Equivalently, it satisfies a different version of the phase carrier condition:
\begin{align*}
    \forall \theta, \quad U (Z_\theta \otimes Z_\theta) = (Z_\theta \otimes Z_\theta) U.
\end{align*}
Examples of excitation number conserving gates include $\CPhase$, $\SQiSW$, the $\iSWAP$ family, and the $\mathrm{fSim}$ gate family.

For excitation number conserving gates, we can reduce the necessary $X_\sigma(\phi)$ rotations as compared to a na\"ive application of the PMW-3 scheme via a slight modification of the phase carrying procedure outlined in~\Cref{fig:2q_carrying}. We first compile $U_0$ using the virtual $Z$ scheme, leaving an extraneous rotation $Z_{\theta_0}$. We then compile $Z_{\theta_0} U_1$ using the PMW-3 scheme. This leaves the same extraneous $Z$ rotation on both qubits, and by taking advantage of the excitation number conserving property, we can perform phase carrying. Furthermore, all single-qubit gates directly prior to a measurement can be compiled using the virtual $Z$ scheme. This procedure is visualized in~\Cref{fig:enc_carrying}. Asymptotically, this reduces the number of $X_\sigma(\phi)$ rotations by a factor of $5/6$. Note that this procedure can be trivially extended to more general gates that satisfy: for all $\theta$, there exists $\phi_0, \phi_1$ such that
\begin{align*}
    U (Z_{\theta}\otimes Z_{\theta}) = (Z_{\phi_0}\otimes Z_{\phi_1})U.
\end{align*}

\begin{figure}[h]
    \centering
    \includegraphics[width=0.4\textwidth]{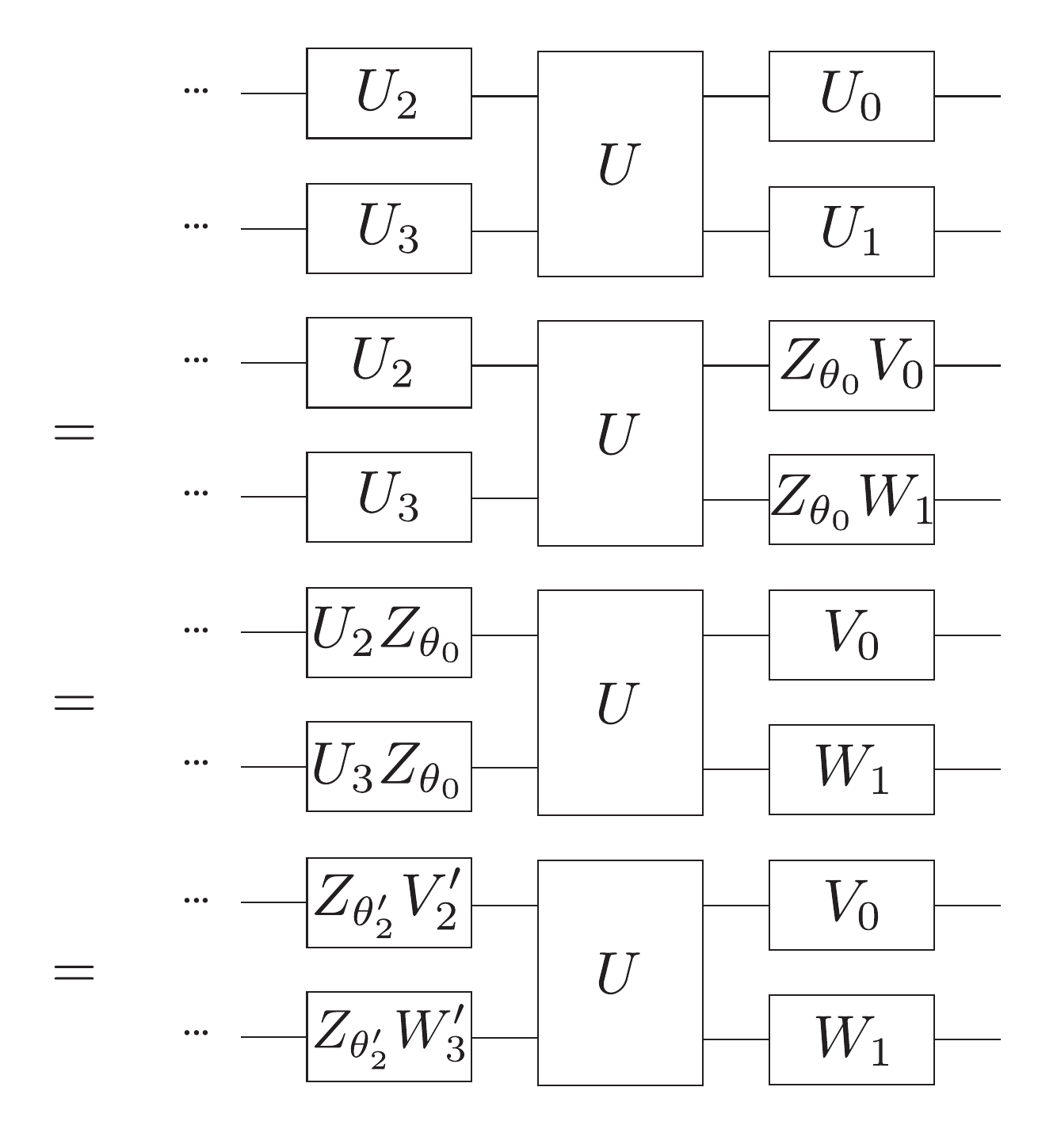}
    \caption{Visualization of compiling an elementary building block of a quantum circuit with a excitation number conserving two-qubit gate $U$. In the first inequality, $V_0, Z_{\theta_0}$ follows form compiling $U_0$ using the virtual $Z$ scheme, and $W_1 \equiv Z_{-\theta_0} U_1$, compiled using PMW-3. The second equality applies the excitation number conserving property of $U$. In the last equality, we compile $U_2 Z_{\theta_0}$ using the virtual $Z$ scheme and $Z_{\theta_2'} U_3 Z_{\theta_0}$ using PMW-3.}
    
    \label{fig:enc_carrying}
\end{figure}

In this modified phase carrying procedure, we often need to compile single-qubit gates plus additional $Z$ rotations. This can also happen when we need to perform phase corrections to the two-qubit gate, as discussed in~\cite{arute2020observation} for $\iSWAP$ family gates. We here note a convenient property of the PMW-3 scheme which makes accommodating additional $Z$ rotations very simple. This stems from the fact that $X_\pi(\phi)$ rotations are unchanged under sandwiching them with the same $Z$ rotation:
\begin{align}
    \label{eq:flip_phase}
    X_\pi(\theta) = Z_{-\theta} X_\pi Z_{\theta} = Z_{\phi -\theta} X_\pi Z_{\phi +\theta}.
\end{align}
Now consider the PMW-3 compilation of a single-qubit gate $U$:
\begin{align*}
    U = (Z_{-\theta} X_{\pi/2} Z_{\theta})(Z_{-\phi} X_{\pi} Z_{\phi})(Z_{-\omega} X_{\pi/2} Z_{\omega}).
\end{align*}
Then, additional $Z$ rotations to $U$ can be absorbed via the following manipulation:
\begin{align*}
    & Z_{\delta_L} U Z_{\delta_R} \\
    &= (Z_{-\theta+\delta_L} X_{\pi/2} Z_{\theta})(Z_{-\phi} X_{\pi} Z_{\phi})(Z_{-\omega} X_{\pi/2} Z_{\omega+\delta_R}) \\
    & = (Z_{-\theta+\delta_L} X_{\pi/2} Z_{\theta-\delta_L})(Z_{-\phi+\delta_L} X_{\pi} Z_{\phi+\delta_R})\\
    &\quad\quad(Z_{-\omega-\delta_R} X_{\pi/2} Z_{\omega+\delta_R}) \\
    & = (Z_{-\theta+\delta_L} X_{\pi/2} Z_{\theta-\delta_L})(Z_{-\phi-\frac{\delta_R-\delta_L}{2}} X_{\pi} Z_{\phi+\frac{\delta_R-\delta_L}{2}})\\
    &\quad\quad(Z_{-\omega-\delta_R} X_{\pi/2} Z_{\omega+\delta_R}),
\end{align*}
where in the last step we use~\Cref{eq:flip_phase}. Hence, by making a simple update
\begin{align*}
    \theta  \mapsto \theta -\delta_L, \quad \phi \mapsto \phi + \frac{\delta_R-\delta_L}{2}, \quad \omega  \mapsto \omega + \delta_R,
\end{align*}
we can compile $U$ plus additional $Z$ rotations.

\section{Uniqueness of the PMW-3 scheme}
\label{sec:uniqueness}
In this section we show that the proposed PMW-3 scheme is essentially unique, in the sense that all feasible schemes have similar forms. For fixed rotation angles $\omega_1,\omega_2,\omega_3$, the PMW-3 scheme $Z_{-\phi_1}X_{\omega_1}Z_{\phi_1}Z_{-\phi_2}X_{\omega_2}Z_{\phi_2}Z_{-\phi_3}X_{\omega_3}Z_{\phi_3}$ can compile arbitrary single-qubit gate if and only if the following set covers $SU(2)/\{\pm I\}$:
\begin{align*}
    &A_{\omega_1,\omega_2,\omega_3}\\
    &=\{Z_{\theta_0}X_{\omega_1}Z_{\theta_1}X_{\omega_2}Z_{\theta_2}X_{\omega_3}Z_{\theta_3}\mid \theta_0+\theta_1+\theta_2+\theta_3=0\}.
\end{align*}
We summarize our uniqueness result with a theorem:
\begin{thm}
\label{thm:uniqueness_PMW-3}
For $\omega_1,\omega_2,\omega_3\in [-\pi,\pi)$, $A_{\omega_1,\omega_2,\omega_3}$ covers $SU(2)/\{\pm I\}$ if and only if one of $\omega_1,\omega_2,\omega_3$ is $\pi$ and the other two are in $\{\pm\pi/2\}$.
\end{thm}

We first introduce a convenient representation of $SU(2)$. It is well known that any element of $SU(2)$ can be uniquely expressed as 
\begin{align*}
    &a_0I-(a_1iZ+a_2iX+a_3iY), \\
    &a_0,a_1,a_2,a_3\in \mathbb{R},\\ &a_0^2+a_1^2+a_2^2+a_3^2=1.
\end{align*}
So there is a natural isomorphism between $SU(2)$ and unit elements in quaternions $\mathbb{H}=\{a_0+a_1i+a_2j+a_3k|a_0,a_1,a_2,a_3\in \mathbb{R}\}$ by mapping $I\mapsto 1, -iZ\mapsto i, -iX\mapsto j, -iY\mapsto k$ (we can verify that $-iZ,-iX,-iY$ satisfy the law of quaternions $ij=-ji=k,jk=-kj=i,ki=-ik=j$).

In this representation, $Z_{\theta_t}=\cos(\theta_t/2)I - \sin(\theta_t/2)iZ$ is written as a unit complex number $z_t=e^{i\theta_t/2}=\cos(\theta_t/2)+\sin(\theta_t/2)i$ and $X_{\omega_t}$ is written as $\cos(\omega_t/2)+\sin(\omega_t/2)j$. Denote $a_t^{(0)}=\cos(\omega_t/2),a_t^{(1)}=\sin(\omega_t/2)$. Using $\overline{z_t}z_t=1$ and $jz_t=\overline{z_t}j$, we can calculate that
\begin{widetext}
\begin{align}
\label{eq:calculation_of_A}
    &Z_{\theta_0}X_{\omega_1}Z_{\theta_1}X_{\omega_2}Z_{\theta_2}X_{\omega_3}Z_{\theta_3} \\
    &= Z_{-\theta_3}Z_{-\theta_2}Z_{-\theta_1}X_{\omega_1}Z_{\theta_1}X_{\omega_2}Z_{\theta_2}X_{\omega_3}Z_{\theta_3}\nonumber\\
    &= \overline{z_3}\overline{z_2}\overline{z_1}(a_1^{(0)}+a_1^{(1)}j)z_1(a_2^{(0)}+a_2^{(1)}j)z_2(a_3^{(0)}+a_3^{(1)}j)z_3\nonumber\\
    &=\left(a_1^{(0)}a_2^{(0)}a_3^{(0)}-a_1^{(0)}a_2^{(1)}a_3^{(1)}u_2-a_1^{(1)}a_2^{(0)}a_3^{(1)}u_1u_2-a_1^{(1)}a_2^{(1)}a_3^{(0)}u_1\right)\nonumber\\
    &+ \left(a_1^{(0)}a_2^{(0)}a_3^{(1)}u_3+a_1^{(0)}a_2^{(1)}a_3^{(0)}u_2u_3+a_1^{(1)}a_2^{(0)}a_3^{(0)}u_1u_2u_3+a_1^{(1)}a_2^{(1)}a_3^{(1)}u_1u_3\right)j\nonumber\\
    &\overset{def}{=} U_{\omega_1,\omega_2,\omega_3}(u_1,u_2,u_3) + V_{\omega_1,\omega_2,\omega_3}(u_1,u_2,u_3)j.
\end{align}
\end{widetext}
Here $u_t=\overline{z_t^2}$ is a unit complex number.  $U_{\omega_1,\omega_2,\omega_3}(u_1,u_2,u_3), V_{\omega_1,\omega_2,\omega_3}(u_1,u_2,u_3)$ are complex functions written in the previous line. The property of unitary matrix ensures 
\begin{align}
\label{eq:norm_1}
    |U_{\omega_1,\omega_2,\omega_3}(u_1,u_2,u_3)|^2+|V_{\omega_1,\omega_2,\omega_3}(u_1,u_2,u_3)|^2=1.
\end{align}

We state a lemma that relates the range of $A_{\omega_1,\omega_2,\omega_3}$ over $SU(2)/\{\pm I\}$ and the range of $U_{\omega_1,\omega_2,\omega_3}$ over the complex numbers. We call the set of  unit complex numbers the unit circle $S^1$ and call the set of complex numbers with norm not larger than unity the closed unit disk $D^2$.

\begin{lem}
$A_{\omega_1,\omega_2,\omega_3}$ covers $SU(2)/\{\pm I\}$ if and only if $\{U_{\omega_1,\omega_2,\omega_3}(u_1,u_2,u_3) \vert u_1,u_2,u_3 \in S^1\}$ covers the set of all antipodal points in $D^2$. 
\end{lem}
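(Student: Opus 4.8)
The plan is to exploit a structural factorization of the two complex functions appearing in~\cref{eq:calculation_of_A}. In the quaternionic coordinates introduced above, an element of $SU(2)$ is a unit quaternion $U + Vj$ with $U,V \in \mathbb{C}$ and $|U|^2+|V|^2=1$, and passing to $SU(2)/\{\pm I\}$ identifies $(U,V)$ with $(-U,-V)$. Because $\theta_0$ is fixed by the constraint $\theta_0+\theta_1+\theta_2+\theta_3=0$, the three phases $u_t=\overline{z_t^2}=e^{-i\theta_t}$ range independently over $S^1$. The key observation, read directly off~\cref{eq:calculation_of_A}, is that $U_{\omega_1,\omega_2,\omega_3}$ does \emph{not} depend on $u_3$, whereas every term of $V_{\omega_1,\omega_2,\omega_3}$ carries a factor of $u_3$. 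Hence I would write
\begin{align*}
U_{\omega_1,\omega_2,\omega_3}(u_1,u_2,u_3) = U(u_1,u_2), \qquad V_{\omega_1,\omega_2,\omega_3}(u_1,u_2,u_3) = u_3\, W(u_1,u_2)
\end{align*}
for an appropriate complex function $W$, so that $u_3$ acts as a free phase on $V$ while leaving $U$ untouched; unitarity~\cref{eq:norm_1} then becomes $|U(u_1,u_2)|^2+|W(u_1,u_2)|^2=1$.

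For the forward direction, suppose $A_{\omega_1,\omega_2,\omega_3}$ covers $SU(2)/\{\pm I\}$. Given any $U_0 \in D^2$, set $w_0=\sqrt{1-|U_0|^2}$ so that $U_0+w_0 j$ is a unit quaternion. By hypothesis some element of $A_{\omega_1,\omega_2,\omega_3}$ equals $\pm(U_0+w_0 j)$, and comparing the $U$-components gives $U(u_1,u_2)=\pm U_0$ for some $u_1,u_2\in S^1$. Thus the image of $U$ meets every antipodal pair $\{U_0,-U_0\}$ in $D^2$, which is exactly the claimed covering statement.

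For the converse, suppose the image of $U$ covers all antipodal points in $D^2$, and let $U_q+V_q j$ be an arbitrary unit quaternion (so $U_q\in D^2$). Choose $u_1,u_2$ with $U(u_1,u_2)=\varepsilon U_q$ for some sign $\varepsilon\in\{\pm 1\}$; unitarity then forces $|W(u_1,u_2)|=\sqrt{1-|U_q|^2}=|V_q|$. If $V_q\neq 0$ I would take $u_3=\varepsilon V_q/W(u_1,u_2)$, which lies on $S^1$ by the norm equality and yields $V=u_3 W=\varepsilon V_q$; if $V_q=0$ then $W=0$ as well and $V=0=\varepsilon V_q$ for any $u_3\in S^1$. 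In either case the resulting element of $A_{\omega_1,\omega_2,\omega_3}$ equals $\varepsilon(U_q+V_q j)$, so $A_{\omega_1,\omega_2,\omega_3}$ covers $SU(2)/\{\pm I\}$.

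The entire substance of the argument is the factorization $V=u_3 W(u_1,u_2)$ with $U$ independent of $u_3$; once this is extracted from~\cref{eq:calculation_of_A}, both implications reduce to bookkeeping. I expect the only point needing care to be the boundary case $|U_q|=1$, where $W$ vanishes and $u_3$ cannot be solved for uniquely — but there the $V$-component equation holds trivially, so it is not a genuine obstacle.
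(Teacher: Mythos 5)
Your proof is correct and takes essentially the same approach as the paper: both arguments hinge on the observation that $U_{\omega_1,\omega_2,\omega_3}$ is independent of $u_3$ while every term of $V_{\omega_1,\omega_2,\omega_3}$ carries a factor $u_3$, so that $u_3$ acts as a free phase used to match the $j$-component after the norm constraint fixes its magnitude. You merely spell out the forward direction and the $V_q=0$ edge case explicitly, which the paper dismisses as obvious.
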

\begin{proof}
Per its isomorphism with the quaternions, $SU(2)$ can be written as $\{u+vj|(u,v)\in \mathbb{C}^2,|u|^2+|v|^2=1\}$. So the forward direction is obvious. 

Now we prove the backward direction. Suppose the range of $U_{\omega_1,\omega_2,\omega_3}$ covers all antipodal pairs in $D^2$. For any $u+vj\in SU(2)$, we choose $u_1,u_2,u_3$ such that $u'=U_{\omega_1,\omega_2,\omega_3}(u_1,u_2,u_3)\in\{\pm u\}$. Denote $v'=V_{\omega_1,\omega_2,\omega_3}(u_1,u_2,u_3)$. By~\Cref{eq:calculation_of_A}, we observe that $U_{\omega_1,\omega_2,\omega_3}$ is independent of $u_3$, while all the terms in $V_{\omega_1,\omega_2,\omega_3}$ contain $u_3$. If we change $u_3$ to $u_3e^{i\alpha}$, $(u',v')$ will become $(u',v'e^{i\alpha})$. Since ~\Cref{eq:norm_1} implies $|v'|=|v|$, we can always choose an appropriate $\alpha$ such that $(u',v'e^{i\alpha})\in\{\pm (u,v)\}$. Hence $A_{\omega_1,\omega_2,\omega_3}$ covers $SU(2)/\{\pm I\}$.
\end{proof}

Therefore, all we need to figure out is the range of $U_{\omega_1,\omega_2,\omega_3}$. By~\Cref{eq:calculation_of_A} it has the form 
$$f(u_1,u_2)=b_{00}+b_{01}u_1+b_{10}u_2+b_{11}u_1u_2, $$ 
where $b_{00},b_{01},b_{10},b_{11}\in \mathbb{R}$. The following lemma addresses when such an expression can cover all antipodal pairs in $D^2$.

\begin{lem}\label{lem:range_solution}
Let $R_f:=\bigcup_{u_1,u_2}\{\pm f(u_1,u_2)\}$. $R_f=D^2$ if and only if $b_{00}=0$, one of $b_{01},b_{10},b_{11}$ is $0$ and the other two are $\pm 1/2$.
\end{lem}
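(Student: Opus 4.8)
The plan is to prove the two directions separately, with the forward (``if'') direction a short factorization and the converse (``only if'') direction carrying essentially all of the difficulty. Throughout I will exploit two symmetries of the statement to cut the bookkeeping: replacing $u_1\mapsto -u_1$ (resp.\ $u_2\mapsto -u_2$) flips the signs of $b_{01},b_{11}$ (resp.\ $b_{10},b_{11}$), and swapping $u_1\leftrightarrow u_2$ interchanges $b_{01}\leftrightarrow b_{10}$; so it suffices to treat nonnegative coefficients up to these moves. For the ``if'' direction I would substitute the claimed form and factor. If $b_{11}=0$ and $\{b_{01},b_{10}\}=\{\pm 1/2\}$, then $f=b_{01}u_1+b_{10}u_2$ is a Minkowski sum of two origin-centered circles of radius $1/2$, whose union over $S^1\times S^1$ is exactly $D^2$. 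If instead $b_{10}=0$ (resp.\ $b_{01}=0$) with the other two coefficients $\pm 1/2$, then $f=u_1(b_{01}+b_{11}u_2)$ (resp.\ $f=u_2(b_{10}+b_{11}u_1)$); the inner factor traces a circle through the origin, sweeping all moduli in $[0,1]$, while the free outer phase supplies every argument, again giving $D^2$. In each case the antipodal union is not even needed.

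For the converse I would present the image as a one-parameter family of circles: fixing $u_1$ and varying $u_2$, $f$ runs over the circle $C(u_1)$ with center $c(u_1)=b_{00}+b_{01}u_1$ and radius $r(u_1)=|b_{10}+b_{11}u_1|$, so the image is $\bigcup_{u_1}C(u_1)$ and the centers lie on the circle $\Gamma$ of radius $|b_{01}|$ about $b_{00}$. The containment $R_f\subseteq D^2$ is then equivalent to the clean pointwise bound $|c(u_1)|+r(u_1)\le 1$ for all $u_1$, together with the mirror bound $|b_{00}+b_{10}u_2|+|b_{01}+b_{11}u_2|\le 1$ obtained by fixing $u_2$ instead. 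Evaluating these at $u=\pm 1$ and adding already yields $\max(|b_{00}|,|b_{01}|)+\max(|b_{10}|,|b_{11}|)\le 1$ and the swapped inequality $\max(|b_{00}|,|b_{10}|)+\max(|b_{01}|,|b_{11}|)\le 1$, which will feed the final case analysis.

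The heart of the matter is turning the covering requirement into algebraic equalities. Since $|f|\le 1$, every point of $\partial D^2$ is a global maximum of $|f|^2$ or the antipode of one, so the extremal set $\mathcal M=\{|f|^2=1\}$ satisfies $F(\mathcal M)\cup(-F(\mathcal M))=\partial D^2$, where $F$ denotes the image map. As these are two compact sets covering $S^1$, a Baire-category argument forces one of them to contain an arc $J$. A nondegenerate $C(u_1)$ touches $\partial D^2$ only at $c(u_1)/|c(u_1)|$, so the arguments of the centers must sweep $J$; because the centers lie on $\Gamma$, this is impossible when the origin lies strictly outside $\Gamma$, yielding $|b_{00}|\le|b_{01}|$ and, symmetrically, $|b_{00}|\le|b_{10}|$. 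Covering the origin requires a member through $0$, i.e.\ $|c(u_1)|=r(u_1)$ for some $u_1$, and pushing both conditions—while separately treating the degenerate case in which some $C(u_1)$ is centered at the origin with radius $1$, which by itself covers $\partial D^2$—forces $b_{00}=0$ and then, via the modulus-range identities $\big|\,|b_{10}|-|b_{11}|\,\big|=0$ and $|b_{10}|+|b_{11}|=1$ together with their analogues, pins one coefficient to $0$ and the other two to $\pm 1/2$.

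The main obstacle, and the step I expect to be most delicate, is making this covering argument fully rigorous: extracting an honest arc from the set-theoretic equality $F(\mathcal M)\cup(-F(\mathcal M))=S^1$ (the Baire input), excluding pathological extremal sets, and carefully handling the degenerate origin-centered circles, which are precisely the configurations realized in the valid cases and so cannot be waved away. A secondary but unavoidable burden is the sign and case bookkeeping, organized by which of $b_{01},b_{10},b_{11}$ vanishes; the symmetries noted at the outset should collapse this to one or two representative cases.
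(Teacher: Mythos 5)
Your ``if'' direction is correct (and your factorization $f=u_1(b_{01}+b_{11}u_2)$ for the $b_{10}=0$ case is cleaner than the paper's substitution trick), and your framing of the image as a one-parameter family of circles $C(u_1)$ with centers on $\Gamma$ is a legitimate alternative starting point for the converse. But the converse as sketched has a genuine gap at exactly the hard point of the lemma: ruling out the case where $b_{01},b_{10},b_{11}$ are \emph{all} nonzero. First, a local error: the step ``the arguments of the centers must sweep $J$; because the centers lie on $\Gamma$, this is impossible when the origin lies strictly outside $\Gamma$'' is wrongly reasoned. A circle $\Gamma$ of positive radius that does not enclose the origin still subtends an arc of directions of positive angular width, so the arguments of its points can perfectly well sweep an arc. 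What \emph{is} true (and rescues the inequality $|b_{00}|\le|b_{01}|$ without any Baire argument) is that those directions are confined to an arc of width strictly less than $\pi$, so the touching points together with their antipodes occupy two arcs of total width less than $2\pi$ and cannot cover $S^1$.

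Second, and more seriously, the finish is an unjustified leap. The conditions you have in hand --- $|b_{00}|\le\min(|b_{01}|,|b_{10}|)$, the pointwise bound $|c(u_1)|+r(u_1)\le 1$, and the origin-covering condition --- do \emph{not} by themselves ``force $b_{00}=0$'' or pin one of $b_{01},b_{10},b_{11}$ to zero: once the origin lies inside or on $\Gamma$, the touching directions genuinely can sweep arcs, and nothing in your sketch excludes a configuration with all three coefficients nonzero. The missing ingredient is a finiteness statement for the extremal set, which is the crux of the paper's proof: there one computes $\max_{\phi_1}|f(e^{i\phi_1},e^{i\phi_2})| = |b_{00}+b_{10}e^{i\phi_2}|+|b_{01}+b_{11}e^{i\phi_2}|$ and observes that when $b_{01}b_{11}\neq 0$ this is a nonconstant (real-analytic) function of $\phi_2$, so it equals the global maximum $1$ at only finitely many $\phi_2$; symmetrically (using $b_{10}b_{11}\neq 0$) for $\phi_1$; hence $\{|f|=1\}$ is finite, and a finite set of boundary points plus antipodes cannot cover $S^1$. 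In your language, this single argument shows both that nondegenerate circles can touch $\partial D^2$ for only finitely many $u_1$ and that degenerate origin-centered circles of radius $1$ cannot occur in this regime (they would produce a continuum of extremal points). Your modulus-range identities $\bigl||b_{10}|-|b_{11}|\bigr|=0$ and $|b_{10}|+|b_{11}|=1$ only become available \emph{after} one of the coefficients is known to vanish; they cannot substitute for this step. You flag this as the ``main obstacle'' yourself, but it is not a matter of polishing the Baire extraction or bookkeeping --- it is the proof, and without it the lemma is not established.
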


Supposing~\Cref{lem:range_solution} is correct, we can apply it to $U_{\omega_1, \omega_2,\omega_3}= a_1^{(0)}a_2^{(0)}a_3^{(0)}-a_1^{(0)}a_2^{(1)}a_3^{(1)}u_2-a_1^{(1)}a_2^{(0)}a_3^{(1)}u_1u_2-a_1^{(1)}a_2^{(1)}a_3^{(0)}u_1$. A simple calculation shows that one of $(a_1^{(0)},a_1^{(1)}),(a_2^{(0)},a_2^{(1)}),(a_3^{(0)},a_3^{(1)})$ is $(0,\pm 1)$ and the other two are $(\pm 1/\sqrt{2},\pm 1/\sqrt{2})$. Hence~\Cref{thm:uniqueness_PMW-3} follows. Now we only need to prove~\Cref{lem:range_solution}.
\begin{proof}
Suppose $b_{11}=0$. Then the ranges of $b_{10}u_1$, $b_{01}u_2$ are circles of radius $|b_{10}|,|b_{01}|$, respectively. And the ranges of $b_{00}+b_{10}u_1+b_{01}u_2$ is the Minkowski sum of the above two circles, thus is a ring centering at $b_{00}$ with outer radius $|b_{10}|+|b_{01}|$ and inner radius $\big||b_{10}|-|b_{01}|\big|$. So the range $R_f$ is a unit disk if and only if $b_{00}=0$, $|b_{10}|=|b_{01}|=1/2$.

Suppose $b_{01}=0$, we can reduce this to the above case by letting $u_1\mapsto u_1\overline{u_2}$ so that $f_2$ becomes $b_{00}+b_{01}u_2+b_{11}u_1$. Hence the only solution is $b_{00}=0, |b_{01}|=|b_{11}|=1/2$. If $b_{10}=0$ the argument is similar.

It suffices to prove that $b_{01},b_{10},b_{11}$ cannot all be nonzero. Assume the opposite. Since $\pm f(e^{i\theta_1},e^{i\theta_2})$ spans the unit disk, we have $\max_{\phi_1,\phi_2\in [-\pi,\pi)}|f(e^{i\phi_1},e^{i\phi_2})|=1$. On the other hand,
\begin{align*}
    &\max_{\phi_1}|f(e^{i\phi_1},e^{i\phi_2})|\\
    =&\max_{\phi_1}|b_{00}+b_{10}e^{i\phi_2}+(b_{01}+b_{11}e^{i\phi_2})e^{i\phi_1}|\\
    =&|b_{00}+b_{10}e^{i\phi_2}|+|b_{01}+b_{11}e^{i\phi_2}|\\
    =&\sqrt{b_{00}^2+b_{10}^2+2b_{00}b_{10}\cos(\phi_2)} \\
    &+ \sqrt{b_{01}^2+b_{11}^2+2b_{01}b_{11}\cos(\phi_2)}.
\end{align*}

Since $b_{01}b_{11}\neq 0$, there are finite number of $\phi_2\in [-\pi,\pi)$ satisfying $\max_{\phi_1}|f(e^{i\phi_1},e^{i\phi_2})|=\max_{\phi_1, \phi_2}|f(e^{i\phi_1},e^{i\phi_2})|=1$. Similarly, there are finite number of $\phi_1\in[-\pi,\pi)$ satisfying $\max_{\phi_2}|f(e^{i\phi_1},e^{i\phi_2})|=1$. It follows that the set $\{(\phi_1,\phi_2)|\phi_1,\phi_2\in [-\pi,\pi),|f(e^{i\phi_1},e^{i\phi_2})|=1\}$ is a finite set and therefore $\pm f(e^{i\phi_1}, e^{i\phi_2})$ cannot cover the whole unit circle.
\end{proof}

\section{Virtual $R$ schemes}
\label{sec:vr}
In this section we prove the following.
\begin{thm}
\label{thm:vr2}
Given an arbitrary rotation axis $R=aX+bY+cZ$, for arbitrary $U\in SU(2)$, there exists angles $\phi,\omega, \theta$ such that
$$U = R_\theta X_{\pi/2}(\phi)X_{\pi/2}(\omega).$$
Alternatively, let $\mathcal{S}=\{X_{\pi/2}(\phi)X_{\pi/2}(\omega)|\phi,\omega\in[0,2\pi]\}$, $\mathcal{T}_R=\{R_\theta|\theta\in[0,2\pi]\}$. Then
$$\forall R, \mathcal{T}_R\cdot \mathcal{S}=SU(2).$$
\end{thm}
This implies that we can perform virtual $R$ gates by compiling $U R_\alpha$ for any $\alpha$ using~\cref{thm:vr2}, obtaining an extraneous $R$ rotation that is carried.
\begin{proof}
We first reduce the problem to a canonical form. Since $R$ and $Z$ have the same spectrum $\{\pm 1\}$, there exists a unitary $U\in SU(2)$ such that $R=UZU^\dag$, hence $\mathcal{T}_R=U\mathcal{T}U^\dag$ for $\mathcal{T}:=\mathcal{T}_Z$. It then suffices to prove that for all $U\in SU(2)$, $$\mathcal{S}\cdot U\cdot \mathcal{T}\cdot U^\dagger=SU(2)\Leftrightarrow \mathcal{S}\cdot U\cdot \mathcal{T}=SU(2)\cdot U=SU(2).$$
This can be further simplified by decomposing $U$ into $Z_\alpha X_\beta Z_\gamma$. Since both $\mathcal{T}$ and $\mathcal{S}$ commutes with $Z$-rotations, the two $Z$-rotations can be commuted out and be absorbed into the $SU(2)$ on the right hand side. It then suffices to prove that
$$\mathcal{T}\cdot X_\alpha\cdot \mathcal{S}=SU(2)$$
for all $\alpha$.

Following the notation in \Cref{sec:uniqueness}, we write a $Z$-rotation as $z\in U(1)$ and $X_\alpha = a+bj$, where $a^2+b^2=1$. An element in $\mathcal{T}\cdot X_\alpha\cdot \mathcal{S}$ can be parameterized as
\begin{align}
&z_3(a+bj)\overline{z_1}\frac{1+j}{\sqrt{2}}z_1\overline{z_2}\frac{1+j}{\sqrt{2}}z_2\nonumber \\
=&\frac{1}{2}(az_3+bz_3j)(1+\overline{z_1^2}j)(1+\overline{z_2^2}j)\nonumber \\
=&\frac{1}{2}\Big(a(1-u_1)-b(1+u_1)u_2)\Big)u_3\\+&\frac{1}{2}\Big(b(1-\overline{u_1})+a(1+\overline{u_1})\overline{u_2}\Big)u_3j
\end{align}
where $u_1\leftarrow \overline{z_1^2z_2^2}, u_2\leftarrow \overline{z_2^2},u_3\leftarrow z_3$.
It then suffices to prove that for arbitrary $x,y\in\mathbb{C}, |x|^2+|y|^2=1$, there exist unit complex numbers $u_1,u_2,u_3$ such that 
\begin{align}
&\begin{cases}
\frac{1}{2}\Big(a(1-u_1)-b(1+u_1)u_2)\Big)u_3=x\\
\frac{1}{2}\Big(b(1-\overline{u_1})+a(1+\overline{u_1})\overline{u_2}\Big)u_3=y
\end{cases}\\
\Leftrightarrow & 
\begin{cases}
a(1-u_1)-b(1+u_1)u_2=2x\overline{u_3}\\
b(1-u_1)+a(1+u_1)u_2=2\overline{y}u_3
\end{cases}\\
\Leftrightarrow & 
\begin{cases}
1-u_1=2ax\overline{u_3}+2b\overline{y}u_3\\
(1+u_1)u_2=2a\overline{y}u_3-2bx\overline{u_3}
\end{cases}
\end{align}
We first show that there always exists a pair $(u_1,u_3)$ satisfying the first equation, and show that $u_2$ can always be found given such a pair. Rewriting the first equation, we get
$$u_1 = 1-2ax\overline{u_3}-2b\overline{y}u_3.$$
With $u_1$ and $u_3$ running over $[0,2\pi]$, the left hand side forms a unit circle centered at the origin and the right hand side a closed symmetric curve centered at $1$. It suffices to show that the two curves intersect, i.e. there exists $u_3$ such that $$|1-2ax\overline{u_3}-2b\overline{y}u_3|=1.$$
Since $t(u_3)=1-2ax\overline{u_3}-2b\overline{y}u_3$ forms a symmetric closed curve centered at $1$, there must exist $u_3^*$ such that $t(u^*_3)\in\mathbb{R}$, hence so does $t(-u_3^*)$. By Cauchy's inequality, $$|2ax\overline{u^*_3}+2b\overline{y}u^*_3|\leq 2 (|a|^2+|b|^2)^{1/2}(|x\overline{u^*_3}|^2+|\overline{y}u^*_3|^2)^{1/2}=2,$$
and we can assume without loss of generality that $|t(u^*_3)|\leq 1\leq |t(-u^*_3)|$. Since the range of $|t(u_3)|$ must be continuous, there exists a $u'_3$ such that $|t(u'_3)|=1$.

Once we solve $u_1$ and $u_3$, we claim that there always exists $u_2$ satisfying the second constraint. It suffices to show that $|1+u_1|=|2a\overline{y}u_3+2bx\overline{u_3}|$; in fact,
$$|2ax\overline{u_3}+2b\overline{y}u_3|^2+|2a\overline{y}u_3-2bx\overline{u_3}|^2=4=|1+u_1|^2+|1-u_1|^2,$$
completing the proof.
\end{proof}

The proof of \Cref{thm:vr2} directly yields an algorithm compiling an arbitrary single-qubit gate using the vR-2 scheme; however the explicit form is rather compilated as the solution involves solving for the intersection of a unit circle with another curve (an ellipse in general), resulting in a quartic equation. 

Similarly, we prove the existence of the vR-1 scheme.

\begin{thm}
\label{thm:vr1}
Given an arbitrary rotation axis $R=aX+bY+cZ$, for arbitrary $U\in SU(2)$, there exists angles $\phi,\sigma, \theta$ such that
$$U = R_\theta X_{\sigma}(\phi).$$
Alternatively, let $\mathcal{Q}=\{X_{\sigma}(\phi)|\phi,\omega\in[0,2\pi]\}$, $\mathcal{T}_R=\{R_\theta|\theta\in[0,2\pi]\}$. Then
$$\forall R, \mathcal{T}_R\cdot \mathcal{Q}=SU(2).$$
\end{thm}
\begin{proof}
Similar to \Cref{thm:vr2}, it suffices to prove that 
$$\mathcal{T}\cdot X_\alpha\cdot\mathcal{Q}=SU(2),\forall \alpha\in[0,\pi/2].$$
Writing $X_\alpha=a+bj$ and a generic element in $\mathcal{Q}$ as $\overline{z_1}(c+dj)z_1$, every element in $\mathcal{T}\cdot X_\alpha\cdot\mathcal{Q}$ can be expressed as
\begin{align}
&z_2(a+bj)\overline{z_1}(c+dj)z_1\\
=&(az_2+bz_2j)(c+d\overline{z_1^2}j)\\
=&(au_2+bu_2j)(c+du_1j)\quad (u_1:=\overline{z_1^2}, u_2:=z_2)\\
=&(ac-bd\overline{u_1})u_2+(adu_1+bc)u_2j
\end{align}
We only need to prove that any unit pair $(x,y)$, there exists unit complex number $u_1,u_2$ and real unit pair $(c,d)$ such that 
\begin{align}
&\begin{cases}
(ac-bd\overline{u_1})u_2=x\\
(adu_1+bc)u_2=y
\end{cases}\\
\Leftrightarrow &\begin{cases}
c=a\overline{x}u_2+by\overline{u_2}\\
du_1=ay\overline{u_2}-b\overline{x}u_2
\end{cases}
\end{align}
Actually, when $u_2$ go through the unit circle, the orbit of $a\overline{x}u_2+by\overline{u_2}$ is a closed curve centric about the original point, hence it intersects the real axis. In other words, there exists unit complex number $u_2$ such that $a\overline{x}u_2+by\overline{u_2}$ is a real number, denoted by $c$. It is not hard to verify
$$|a\overline{x}u_2+by\overline{u_2}|^2+|ay\overline{u_2}-b\overline{x}u_2|^2=1$$
So there exists real number $d=\sqrt{1-c^2}$ and unit complex number $u_1$ such that $ayu_2-bx\overline{u_2}=du_1$.

\end{proof}
The proof of \Cref{thm:vr1} yields an explicit algorithm that involves solving the intersection of an ellipse and the real axis, thus only giving rise to a quadratic equation. 

\section{Compilation using virtual $R$ in the presence of leaky gates}
\label{sec:leaky}

Similar to phase carriers taking advantage of virtual $Z$ compilation schemes to reduce the single qubit gate overhead, we can use the virtual $R$ compilation schemes whenever the native two-qubit gate is leaky. Such gates include the $\CNOT$ gate and $\mathrm{DDCZ}$. In fact we prove that leaky gates are exactly local equivalents of phase carriers, as characterized in~\Cref{prp:local_eq}.  Consequently, a two-qubit gate leaking on one wire must leak on the other as well, and the virtual $R$ compilation schemes can apply to all the single-qubit gates in the circuit, with the exception of the single-qubit gates directly before a measurement. However, the single-qubit gates can be compiled with the virtual $Z$ schemes as measurements in the computational basis commutes with $Z$-rotations.

\begin{figure}[h]
    \centering
    \includegraphics[width = 0.49\textwidth]{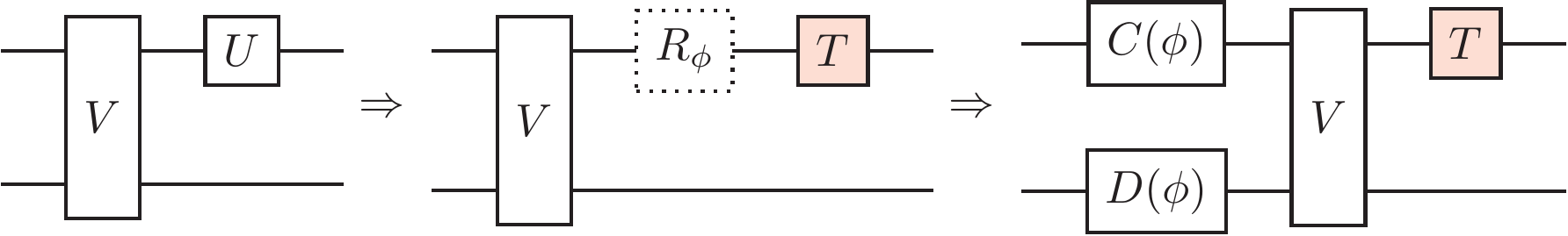}
    \caption{Illustration of the virtual $R$ compilation scheme in the presence of leaky gates. When a single-qubit unitary $U$ is followed by a leaky gate $V$, one can apply the virtual $R$ compilation schemes and decompose $U=R_\phi\cdot T$, where $T\in\mathcal{S}(\mathcal{Q})$ is to be applied physically with pulses. The extraneous rotation $R_\phi$ is then ``leaked'' through the two-qubit gate and will be jointly compiled with the following gates. Implementing $T$ saves one pulse compared to compiling $U$ with PMW compilation schemes.}
\end{figure}

\begin{thm}
A two-qubit gate $V$ is leaky on the first wire if and only if it is locally equivalent to a phase carrier. Consequently, it leaks on the other wire.
\end{thm}

\begin{proof}
We start from the condition
$$V(R_\phi\otimes I)V^\dag = C(\phi)\otimes D(\phi).$$
Without loss of generality, we assume $C(\phi), D(\phi)$ both to be differentiable with respect to $\phi$. Diffrentiating over both sides with respect to $\phi$ we get
\begin{align*}
    V(-iR\cdot R_\phi\otimes I)V^\dag&=C'(\phi)\otimes D(\phi)+C(\phi)\otimes D'(\phi)\\
    V(R\otimes I)V^\dag&=i(C'(\phi)C^\dag(\phi)\otimes I+I\otimes D'(\phi)D^\dag(\phi)).
\end{align*}
The left-hand side is a constant, and hence are $P:=iC'(\phi)C^\dag(\phi)=iC'(0)C^\dag(0)$ and $Q:=iD'(\phi)D^\dag(\phi)=iD'(0)D^\dag(0)$. The only way that the spectrums on both side match is that one of $P,Q$ is 0 and the spectrum of the other is $\{\pm 1\}$. Up to local equivalence and the $\SWAP$ gate, one can assume that $R=P=Z$ and $Q=0$, i.e.
$$V(Z\otimes I)V^\dag = Z\otimes I\Leftrightarrow [V,Z\otimes I]=0.$$
The only way that a unitary $V$ commutes with $Z\otimes I$ is that it is a controlled gate $V=|0\rangle\langle 0|\otimes U_0 + |1\rangle\langle 1|\otimes U_1$, making it locally equivalent to a CPHASE family gate. Therefore every leaky gate is locally equivalent to a CPHASE family gate up to a SWAP, exactly the local equivalence classes spanned by the phase carriers, and consequently leaks on the other wire as well. Conversely, every gate locally equivalent to a phase carrier is by definition leaky.
\end{proof}


\begin{thebibliography}{10}

\bibitem{lao2022software}
Lingling Lao, Alexander Korotkov, Zhang Jiang, Wojciech Mruczkiewicz, Thomas~E
  O'Brien, and Daniel Browne.
\newblock Software mitigation of coherent two-qubit gate errors.
\newblock {\em Quantum Science and Technology}, 2022.

\bibitem{huang2021towards}
Cupjin Huang, Dawei Ding, Feng Wu, Linghang Kong, Fang Zhang, Xiaotong Ni,
  Yaoyun Shi, Hui-hai Zhao, and Jianxin Chen.
\newblock Towards ultra-high fidelity quantum operations: Sqisw gate as a
  native two-qubit gate.
\newblock {\em arXiv preprint arXiv:2105.06074}, 2021.

\bibitem{kong2021framework}
Linghang Kong.
\newblock A framework for randomized benchmarking over compact groups.
\newblock {\em arXiv preprint arXiv:2111.10357}, 2021.

\bibitem{nielsen2010quantum}
Michael~A Nielsen and Isaac~L Chuang.
\newblock {\em Quantum Computation and Quantum Information}.
\newblock Cambridge University Press, 2010.

\bibitem{chen2016measuring}
Zijun Chen, Julian Kelly, Chris Quintana, R~Barends, B~Campbell, Yu~Chen,
  B~Chiaro, A~Dunsworth, AG~Fowler, E~Lucero, et~al.
\newblock Measuring and suppressing quantum state leakage in a superconducting
  qubit.
\newblock {\em Physical review letters}, 116(2):020501, 2016.

\bibitem{barends2014rolling}
R~Barends, J~Kelly, A~Veitia, A~Megrant, AG~Fowler, B~Campbell, Y~Chen, Z~Chen,
  B~Chiaro, A~Dunsworth, et~al.
\newblock Rolling quantum dice with a superconducting qubit.
\newblock {\em Physical Review A}, 90(3):030303, 2014.

\bibitem{krantz2019quantum}
Philip Krantz, Morten Kjaergaard, Fei Yan, Terry~P Orlando, Simon Gustavsson,
  and William~D Oliver.
\newblock A quantum engineer's guide to superconducting qubits.
\newblock {\em Applied Physics Reviews}, 6(2):021318, 2019.

\bibitem{mckay2017efficient}
David~C McKay, Christopher~J Wood, Sarah Sheldon, Jerry~M Chow, and Jay~M
  Gambetta.
\newblock Efficient z gates for quantum computing.
\newblock {\em Physical Review A}, 96(2):022330, 2017.

\bibitem{ball2016role}
Harrison Ball, William~D Oliver, and Michael~J Biercuk.
\newblock The role of master clock stability in quantum information processing.
\newblock {\em npj Quantum Information}, 2(1):1--8, 2016.

\bibitem{mi2021information}
Xiao Mi, Pedram Roushan, Chris Quintana, Salvatore Mandrà, Jeffrey Marshall,
  Charles Neill, Frank Arute, et~al.
\newblock Information scrambling in quantum circuits.
\newblock {\em Science}, 374(6574):1479--1483, 2021.

\bibitem{qiskitRZ}
Qiskit circuit library rzxgate.
\newblock Qiskit 0.23.6 documentation,
  https://qiskit.org/documentation/stubs/qiskit.circuit.library.RZ\\
  XGate.html.

\bibitem{guo2018dephasing}
Qiujiang Guo, Shi-Biao Zheng, Jianwen Wang, Chao Song, Pengfei Zhang, Kemin Li,
  Wuxin Liu, Hui Deng, Keqiang Huang, Dongning Zheng, et~al.
\newblock Dephasing-insensitive quantum information storage and processing with
  superconducting qubits.
\newblock {\em Physical review letters}, 121(13):130501, 2018.

\bibitem{foxen2020demonstrating}
Brooks Foxen, Charles Neill, Andrew Dunsworth, Pedram Roushan, Ben Chiaro,
  Anthony Megrant, Julian Kelly, Zijun Chen, Kevin Satzinger, Rami Barends,
  et~al.
\newblock Demonstrating a continuous set of two-qubit gates for near-term
  quantum algorithms.
\newblock {\em Physical Review Letters}, 125(12):120504, 2020.

\bibitem{roth2017analysis}
Marco Roth, Marc Ganzhorn, Nikolaj Moll, Stefan Filipp, Gian Salis, and
  Sebastian Schmidt.
\newblock Analysis of a parametrically driven exchange-type gate and a
  two-photon excitation gate between superconducting qubits.
\newblock {\em Physical Review A}, 96(6):062323, 2017.

\bibitem{Note1}
To disambiguate, we refer to virtual $Z$ gates or schemes as the use of PMW
  pulses to obtain an effective $Z$ rotation $\cdot Z_\phi $, while we refer to
  our proposed PMW gates or schemes as the use PMW pulses to obtain an
  effective $Z$ conjugation $Z_{-\phi } \cdot Z_\phi $. The former requires
  phase carrying while the latter does not. In other words, PMW compiled gates
  do not contribute to tracked phases.

\bibitem{Note2}
Note however that the cross-resonance gate is microwave-activated~\cite {paraoanu2006microwave,rigetti2010fully}, in which case we do not need thetwo-qubit gate to be a phase carrier.

\bibitem{sorensen2000entanglement}
Anders S{\o}rensen and Klaus M{\o}lmer.
\newblock Entanglement and quantum computation with ions in thermal motion.
\newblock {\em Physical Review A}, 62(2):022311, 2000.

\bibitem{bruzewicz2019trapped}
Colin~D Bruzewicz, John Chiaverini, Robert McConnell, and Jeremy~M Sage.
\newblock Trapped-ion quantum computing: Progress and challenges.
\newblock {\em Applied Physics Reviews}, 6(2):021314, 2019.

\bibitem{saki2021shuttle}
Abdullah~Ash Saki, Rasit~Onur Topaloglu, and Swaroop Ghosh.
\newblock Shuttle-exploiting attacks and their defenses in trapped-ion quantum
  computers.
\newblock {\em IEEE Access}, 10:2686--2699, 2021.

\bibitem{burkard2021semiconductor}
Guido Burkard, Thaddeus~D Ladd, John~M Nichol, Andrew Pan, and Jason~R Petta.
\newblock Semiconductor spin qubits.
\newblock {\em arXiv preprint arXiv:2112.08863}, 2021.

\bibitem{dobrovitski2013quantum}
VV~Dobrovitski, GD~Fuchs, AL~Falk, C~Santori, and DD~Awschalom.
\newblock Quantum control over single spins in diamond.
\newblock {\em Annu. Rev. Condens. Matter Phys.}, 4(1):23--50, 2013.

\bibitem{wang2022realizing}
Tenghui Wang, Gengyan Zhang, Hsiang-Sheng Ku, Feng Wu, Xizheng Ma, and Ran Gao.
\newblock Realizing high-fidelity and low-leakage gates in a fluxonium
  processor.
\newblock {\em Bulletin of the American Physical Society}, 2022.

\bibitem{peterson2020fixed}
Eric~C Peterson, Gavin~E Crooks, and Robert~S Smith.
\newblock Fixed-depth two-qubit circuits and the monodromy polytope.
\newblock {\em Quantum}, 4:247, 2020.

\bibitem{abrams2020implementation}
Deanna~M Abrams, Nicolas Didier, Blake~R Johnson, Marcus~P da~Silva, and Colm~A
  Ryan.
\newblock Implementation of xy entangling gates with a single calibrated pulse.
\newblock {\em Nature Electronics}, pages 1--7, 2020.

\bibitem{kelly2014optimal}
Julian Kelly, R~Barends, B~Campbell, Y~Chen, Z~Chen, B~Chiaro, A~Dunsworth,
  Austin~G Fowler, I-C Hoi, E~Jeffrey, et~al.
\newblock Optimal quantum control using randomized benchmarking.
\newblock {\em Physical review letters}, 112(24):240504, 2014.

\bibitem{xu2020experimental}
Yuan Xu, Ziyue Hua, Tao Chen, Xiaoxuan Pan, Xuegang Li, Jiaxiu Han, Weizhou
  Cai, Yuwei Ma, Haiyan Wang, YP~Song, et~al.
\newblock Experimental implementation of universal nonadiabatic geometric
  quantum gates in a superconducting circuit.
\newblock {\em Physical Review Letters}, 124(23):230503, 2020.

\bibitem{Note3}
Note that if a single-qubit gate follows we can always absorb it into the
  single-qubit gate we're compiling.

\bibitem{zhang2003geometric}
Jun Zhang, Jiri Vala, Shankar Sastry, and K~Birgitta Whaley.
\newblock Geometric theory of nonlocal two-qubit operations.
\newblock {\em Physical Review A}, 67(4):042313, 2003.

\bibitem{Note4}
Note that technically $U\in SU(4)$, so $d = -(a+b+c)$.

\bibitem{cross2019validating}
Andrew~W Cross, Lev~S Bishop, Sarah Sheldon, Paul~D Nation, and Jay~M Gambetta.
\newblock Validating quantum computers using randomized model circuits.
\newblock {\em Physical Review A}, 100(3):032328, 2019.

\bibitem{crooks2020gates}
Gavin~E Crooks.
\newblock Gates, states, and circuits.
\newblock Technical report, Berkeley Institute for Theoretical Science, 2020.

\bibitem{arute2020observation}
Frank Arute, Kunal Arya, Ryan Babbush, Dave Bacon, Joseph~C Bardin, Rami
  Barends, Andreas Bengtsson, Sergio Boixo, Michael Broughton, Bob~B Buckley,
  et~al.
\newblock Observation of separated dynamics of charge and spin in the
  fermi-hubbard model.
\newblock {\em arXiv preprint arXiv:2010.07965}, 2020.

\bibitem{paraoanu2006microwave}
GS~Paraoanu.
\newblock Microwave-induced coupling of superconducting qubits.
\newblock {\em Physical Review B}, 74(14):140504, 2006.

\bibitem{rigetti2010fully}
Chad Rigetti and Michel Devoret.
\newblock Fully microwave-tunable universal gates in superconducting qubits
  with linear couplings and fixed transition frequencies.
\newblock {\em Physical Review B}, 81(13):134507, 2010.

\end{thebibliography}

\end{document}